\theoremstyle{definition}
\crefname{claim}{Claim}{Claims}
\newcommand{\NN}{\ensuremath \mathbb{N}}
\newcommand{\preTreeDec}{pre-tree decomposition}
\newcommand{\PreTreeDec}{Pre-tree decomposition}
\newcommand{\submod}{submodular}
\DeclareMathOperator{\wid}{wd}
\DeclareMathOperator{\dep}{dp}
\DeclareMathOperator{\partitions}{Part}
\DeclareMathOperator*{\CR}{CR}
\DeclareMathOperator*{\monCR}{mon-CR}
\newcommand{\CRparam}[3][G]{\ensuremath{\CR_{#3}^{#2}(#1)}}
\newcommand{\monCRparam}[3][G]{\ensuremath{\monCR_{#3}^{#2}(#1)}}
\newcommand{\CRkq}[1][G]{\CRparam[#1]{k}{q}}
\newcommand{\monCRkq}[1][G]{\monCRparam[#1]{k}{q}}
\newcommand{\complementOf}[1]{\ensuremath{\overline{#1}}}
\newcommand{\complementOfB}[1]{\complementOf{#1}}
\newcommand{\ext}{\ensuremath{\rightarrow}}
\newcommand{\considered}{considered}
\newcommand{\consider}{consider}
\newcommand{\I}{\ensuremath{I}}
\def\namedlabel#1#2{\begingroup
	#2%
	\def\@currentlabel{#2}%
	\phantomsection\label{#1}\endgroup
}
\title{Monotonicity of the cops and robber game for bounded depth
treewidth}
\author{Isolde Adler}{University of Bamberg, Germany}{isolde.adler@uni-bamberg.de}{https://orcid.org/0000-0002-9667-9841}{}
\author{Eva Fluck}{RWTH Aachen University, Germany}{fluck@cs.rwth-aachen.de}{https://orcid.org/0000-0002-9643-6081}{}
\authorrunning{I. Adler and E. Fluck}
\keywords{tree decompositions, treewidth, treedepth, cops-and-robber game, monotonicity, homomorphism distinguishing closure}
\begin{document}

\maketitle

\begin{abstract}
	We study a variation of the cops and robber game characterising treewidth,
where in each play at most $q$ cops can be placed in order to catch the robber, 
	where $q$ is a parameter of the game.
We prove that if $k$ cops have a winning strategy in this game, then $k$ cops have a
monotone winning strategy. As a corollary we obtain a new characterisation of
bounded depth treewidth, and we give a positive answer to an open question 
	by Fluck, Seppelt and Spitzer (2024), thus showing that 
graph classes of bounded depth treewidth
are homomorphism distinguishing closed.
	
Our proof of monotonicity
substantially reorganises a winning strategy by first transforming it into
a pre-decomposition, which is inspired by decompositions of matroids, and then applying an intricate 
	breadth-first `cleaning up' 
procedure along the pre-decomposition (which may temporarily lose the property of representing a strategy), in order to achieve
monotonicity while controlling the number of cop placements simultaneously 
	across all branches of the decomposition via a vertex exchange argument. We believe this can be useful in future research.
\end{abstract}

\section{Introduction}

Search games were introduced by Parsons and Petrov 
in \cite{Parsons78,parsons78search,petrov82} and since then gained 
much interest in many (applied and theoretical) areas of computer science and in discrete 
mathematics \cite{AignerF84,BodlaenderT04,BienstockS91,LaPaugh93,FranklinGY00,Obdrzalek06,HunterK08,FominGK08,HollingerKS10,GroheM14}.
In search games on graphs, a fugitive and a 
set of searchers move on a graph, according to 
given rules. The searchers' goal is to capture the
fugitive, and the fugitive tries to escape. Here the interest lies in minimising the resources needed to guarantee capture. Typically this means minimising the number of searchers, but we also seek to bound the number of new placements of searchers.
Search games have proven very useful for providing a deep understanding of structural and 
algorithmic properties of width parameters of graphs,
such as treewidth~\cite{Bienstock89,SeymourT93}, 
pathwidth~\cite{BienstockS91}, cutwidth~\cite{MakedonS89}, and 
directed treewidth~\cite{JohnsonRST01}, treedepth~\cite{NesetrilM06}, and $b$-branched treewidth~\cite{FominFN09,MazoitN08}. 

The crux in relating a given variant of a search game to a width parameter often lies in the
question of whether the game is \emph{monotone}, i.\,e.~whether the searchers always have a winning strategy 
in which a previously cleared area never needs to be searched again -- without needing additional resources.
   Furthermore, monotonicity of a search game provides a polynomial space certificate for proving that
determining the winner is in NP.

In their classic paper~\cite{SeymourT93}, Seymour and Thomas proved monotonicity of the cops and robber game characterising treewidth. 
They use a very elegant inductive argument via the 
dual concept of \emph{brambles}.
In this paper we study a variation of this game, where $k$ cops try to capture a robber, but they are limited to making at most $q$ cop placements, for a fixed number $q\in \mathbb N$. 
It is an open question from~\cite{FluckSS24}, whether this game is monotone. We give a positive answer to this question.

The notion of \emph{treedepth} was first introduced by Ne\v{s}et\v{r}il and Ossona de Mendes~\cite{NesetrilM06}. They exhibit a number of equivalent parameters, and a characterisation by a monotone game is implicitely given. This was subsequently made more explicit in~\cite{GiannopoulouT11}. In~\cite{GiannopoulouHT12}, a characterisation by a different game called \emph{lifo game} is given for which monotonicity is proven. The game we study can be seen as generalising the monotone game implicit in~\cite{NesetrilM06}. However, it is strictly more general and it is not monotone by definition.

Recently, width parameters received a renewed interest in the context of counting homomorphisms and the expressive power of logics~\cite{dvorak_recognizing_2010,grohe_counting_2020,dawar_lovasz-type_2021,ScheidtS23,FluckSS24}. In this context a non-monotone search game characterisation of the width parameter is useful to ensure that there are no graphs of higher width that can be added to the graph class without changing the expressive power \cite{neuen_homomorphism-distinguishing_2023,FluckSS24}. The main obstacle then is to find such a non-monotone characterisation, as the natural characterisation as a search-game of many graph parameters is inherently monotone. \emph{Bounded depth treewidth} and the game studied in this paper were first defined in~\cite{FluckSS24}. An equivalent characterisation of these graph classes by so-called $k$-pebble forest covers of depth $q$, which is bounded width tree depth, was already given in~\cite{Abramsky_pebbling_2017}.

\subparagraph{Homomorphism counts.}
Homomorphism counts are an emerging tool to study equivalence relations between graphs.
Many equivalence relations between graphs can be characterized as homomorphism indistinguishability relations, these include graph isomorphism \cite{lovasz_operations_1967}, graph isomorphism relaxations \cite{mancinska_quantum_2020, grohe_homomorphism_2022, roberson_lasserre_2023}, cospectrality \cite{dell_lovasz_2018} and equivalence with respect to first-order logic with counting quantifiers \cite{dvorak_recognizing_2010, grohe_counting_2020, dawar_lovasz-type_2021, FluckSS24}.
In order to study the expressiveness of such equivalence relations, it is crucial to know under which circumstances distinct graph classes yield distinct equivalence relations.
Towards this question one considers the closure of a graph class under homomorphism indistinguishability.
Let $\mathcal{F}$ be a graph class.
Two graphs $G, H$ are \emph{homomorphism indistinguishable over $\mathcal{F}$}, if for all $F\in\mathcal{F}$ the number of homomorphisms from $F$ to $H$ equals the number of homomorphisms from $F$ to $G$.
The graph class $\mathcal{F}$ is \emph{homomorphism distinguishing closed}, if for every graph $F\notin\mathcal{F}$ there exists two graphs $G, H$, that are homomorphism indistinguishable over $\mathcal{F}$ but that do not have the same number of homomorphisms from $F$.
It has been conjectured by Roberson \cite{Roberson_oddomorphisms_2022}, that all graph classes that are closed under taking minors and disjoint unions are homomorphism distinguishing closed.
So far the list of graph classes for which the conjecture is confirmed is short: the class of all planar graphs \cite{Roberson_oddomorphisms_2022}, graph classes that are essentially finite \cite{seppelt_logical_2023}, the classes of all graphs of tree width at most $k-1$ \cite{neuen_homomorphism-distinguishing_2023} and the classes of all graphs of tree depth at most $q$ \cite{FluckSS24}.
The latter two results rely on characterisations of the graph classes in terms of non-monotone cops-and-robber games.
We study \emph{bounded depth treewidth}, which bounds both the width and the depth simultaneously.
We give a game characterisation that does not rely on monotonicity, and as a consequence we obtain that graph classes of bounded depth treewidth are also homomorphism distinguishing closed.

\subparagraph{Our contribution.}

We show the following (cf.~Theorem~\ref{thm:equivalence}).\\[-.4cm]

\noindent \emph{Fix integers $k,q\geq 1$. For every graph $G$ the following are equivalent.
\begin{itemize}
 \item $G$ has a tree decomposition of width at most $k-1$ and depth at most $q$.
 \item $k$ cops have a monotone winning strategy in the cops and robber game on $G$ with at most $q$ placements.
 \item $k$ cops have a winning strategy in the cops and robber game on $G$ with at most $q$ placements.
\end{itemize}
}
\noindent The equivalence between the last two statements gives a positive answer to an open question from~\cite{FluckSS24}.
Our proof of monotonicity gives both a proof of monotonicity for the classical cops and robber game characterising treewidth as well as for the game characterising treedepth as special cases.
As a corollary, we obtain the following (cf.~Theorem~\ref{thm:exactness}).\\[-.3cm]

\noindent\emph{Let $k,q\geq 0$ be integers.
The class of graphs having a tree decomposition of width at most $k-1$ and depth at most $q$ is homomorphism distinguishing closed.
}

\subparagraph{Proof techniques.}
In contrast to the proof of monotonicity of the classic cops and robber game~\cite{SeymourT93}, our proof does not use a dual concept such as brambles. Instead, we modify a (possibly non-monotone) winning strategy, turning it first into what we call a \emph{pre-decomposition}, and then cleaning it up while keeping track of width and depth, thus finally transforming the pre-decomposition
into a monotone winning strategy.
Our concept of pre-decomposition is inspired by decompositions of matroids and it is based on ideas from~\cite{abs-0906-3857,AminiMNT09}. 
Our cleaning-up technique is similar to the proof of monotonicity of the the game for $b$-branching treewidth~\cite{MazoitN08}. However,
the cleaning-up technique in~\cite{MazoitN08} loses track of the number of cop placements, as local modifications may have 
non-local effects that are not controlled. We need to keep track in order to control the depth. 

This poses a major challenge which we resolve in our proof by a fine grain cleaning-up technique in our pre-decompositions based on a careful decision of which vertices to `push up and through the tree' and which to `push down'. 
The vertices `pushed up' may have an effect on the part of the pre-decomposition that was processed in previous steps, which we manage to control by a vertex exchange argument.
Additionally we keep track of how the first modification at some node in the pre-decomposition relates back to the original strategy.
We believe that our techniques will also help in future research.

Our proof provides an independent proof of monotonicity of the classic game characterising tree-width as a special case, namely when $q$ is greater than or equal to the number of vertices of the graph. Our proof strategy is entirely different, as it does not use an equivalence via a dual object such as brambles. Instead, we provide a more direct transformation of a (possibly non-monotone) winning strategy.

\subparagraph{Further related research.}
 Search games are used to model a variety of
real-world problems such as searching a lost person in a
system of caves~\cite{Parsons78}, clearing contaminated tunnels~\cite{LaPaugh93}, 
searching environments in robotics~\cite{HollingerKS10}, and modelling bugs in
distributed environments~\cite{FranklinGY00}, cf.~\cite{FominT08} for a survey.

 There is a fine line between games that are monotone and those that are not. For example, the marshalls and robber game played on a hypergraph is a natural generalisation of the cops and robber game, it is related to hypertree-width, but it is not 
 monotone~\cite{Adler04}. However, the monotone and the non-monotone variants are strongly related~\cite{AdlerGG07} to eachother.  

 \subparagraph{Structure of the paper.}
In Section~\ref{sec:preliminaries} we fix our notation and we define tree decompositions of bounded depth and width.  
Section~\ref{sec:predec} introduces pre-tree decompositions, relevant properties, and establishes a relation to tree decompositions. The game is introduced in Section~\ref{sec:game}, and in Section~\ref{sec:make-exact} we give the main proofs, showing how to make a strategy tree exact while maintaining the bounds on width and depth.
The insights given by our answer to the open question in the area of homomorphism counts are briefly discussed in~\cref{sec:hom}.

\section{Preliminaries}\label{sec:preliminaries}

\subparagraph*{Sets and partitions}
Let $A$ be a finite set.
We write $2^A$ to denote the power-set of $A$ and, for $k\in \NN$, $\binom{A}{\leq k}$ to denote all subsets of $A$ of size $\leq k$.
$\partitions(A)$ is the set of all \emph{partitions} of $A$, where we allow partitions to contain multiple (but finite) copies of the empty set.
Let $P=\{X_1,\ldots,X_d\}\in\partitions(A)$ and $F\subseteq A$.
For $i\in[d]$, the partition
\[
P_{X_i\ext F}\coloneqq \{X_1\setminus F, \ldots, X_{i-1}\setminus F, X_i\cup F,X_{i+1}\setminus F,\ldots,X_d\setminus F^i \},
\]
is called the \emph{$F$-extension in $X_i$ of $P$}.
A function $w\colon \partitions(A)\rightarrow \NN$ is \emph{\submod} if, for all $P,Q\in \partitions(A)$, for all sets $X\in P$ and $Y\in Q$ with $X\cup Y \neq A$, it holds that
\[w(P)+w(Q)\geq w(P_{X\ext \complementOf{Y}})+w(Q_{Y\ext \complementOf{X}}).\]
Let $f\colon A\rightarrow B$ be a function and $C\subseteq A$.
By $f|_C$ we denote the restriction of $f$ to $C$, i.\,e.~$f|_C\colon C\rightarrow B$ and $f|_C(c) = f(c)$, for all $c\in C$.

\subparagraph*{Graphs}
A graph $G$ is a tuple $(V(G), E(G))$, where $V(G)$ is a finite set of vertices and $E(G) \subseteq \binom{V(G)}{\leq 2}$ is the set of edges.
We usually write $uv$ or $vu$ to denote the edge $\{u, v\} \in E(G)$.
If \(G\) is clear from the context we write \(V,E\) instead of \(V(G),E(G)\).
We write $\I(G)$ to denote the set of isolated vertices in $G$, that is for every $v\in\I(G)$, there is no $u\in V(G)$ with $u\neq v$ and $uv\in E(G)$.
By $G^\circ$ we denote the graph obtained from $G$ by adding all self-loops that are not present in $G$, that is $V(G^\circ)\coloneqq V(G)$ and $E(G^\circ)\coloneqq E(G) \cup \{vv\mid v\in V(G)\}$.
For $v\in V$ we write $E_G(v)\coloneqq\{uv\mid uv\in E(G)\}$ for the edges incident to $v$.

A \emph{tree} is a graph where any two vertices are connected by exactly one path.
A \emph{rooted tree} $(T, r)$ is a tree $T$ together with some designated vertex $r \in V(T)$, the \emph{root} of $T$.
By \(L(T)\) we denote the set of all \emph{leaves} of \(T\), that is \(L(T)\coloneqq\{ v\in V(T) \mid |N(v)|=1\}\).
All vertices that are not leafs are called \emph{inner vertices}.

At times, the following alternative definition is more convenient.
We can view a rooted tree $(T, r)$ as a pair $(V(T), \preceq)$, where $\preceq$ is a partial order on $V(T)$ and for every $v \in V(T)$ the elements of the set $\{u \in V(T) \mid u \preceq v\}$ are pairwise comparable: The minimal element of $\preceq$ is precisely the root of $T$, and we let $v \preceq w$ if $v$ is on the unique path from $r$ to $w$.
Let $t,t'\in V(T)$, we call $t^*\in V(T)$ the \emph{greatest common ancestor} if $t^*\preceq t,t'$ but for all $t''\in V(T)$ with $t^*\prec t''$ either $t''\not\preceq t$ or $t''\not\preceq t'$.

\begin{definition}
	\label{def:tree_decomp}
	Let $G$ be a graph, let $(T,r)$ be a rooted tree and let $\beta\colon V(T)\rightarrow 2^{V(G)}$ be a function from the nodes of $T$ to sets of vertices of $G$.
	We call $(T,r,\beta)$ a \emph{tree decomposition} of $G$, if
	\begin{description}
		\item[\namedlabel{ax:TreeDec1}{(T1)}] $\bigcup_{t\in V(T)} G[\beta(t)] = G$, and
		\item[\namedlabel{ax:TreeDec2}{(T2)}] for every vertex $v \in G$, the graph $T_v\coloneqq T[\{t\in V(T)\mid v\in \beta(t)\}]$ is connected.
	\end{description}
	The sets $\beta(t)$ are called the \emph{bags} of this tree decomposition.
\end{definition}

The \emph{width} of a tree decomposition $(T,r,\beta)$ is $\wid(T,r,\beta)\coloneqq\max_{t\in V(T)}|\beta(t)|-1$, the \emph{depth} is $\dep(T,r,\beta)\coloneqq\max_{\ell\in L(T)} |\bigcup_{t\in P_\ell} \beta(t)|$, where $P_\ell$ is the path from $\ell$ to the root.
The \emph{tree width} of a graph $G$ is the minimum width of any tree decomposition of $G$, the \emph{tree depth} of a graph $G$ is the minimum depth of any tree decomposition (see cf~\cite{FluckSS24}).
For $k,q\geq 1$ we define the class $\mathcal{T}^k_q$ to be all graphs that have a tree decomposition $(T,r,\beta)$ with $\wid(T,r,\beta)\leq k-1$ and $\dep(T,r,\beta)\leq q$.
The following lemma is a well known consequence from \ref{ax:TreeDec2}.

\begin{lemma}
	\label{lem:tdec-conn}
	Let $G$ be a graph and $U\subseteq V(G)$ connected in $G$.
	Let $(T,r,\beta)$ be a tree decomposition of $G$, then
	$T_U\coloneqq T[\{t\in V(T)\mid U\cap \beta(t)\neq \emptyset\}]$ is connected.
\end{lemma}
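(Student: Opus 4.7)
The plan is to pick two arbitrary nodes $t_1,t_2 \in V(T_U)$ and exhibit a path between them that stays inside $T_U$. The building blocks are the subtrees $T_v := T[\{t : v \in \beta(t)\}]$ for $v \in U$: each of these is connected by axiom~\ref{ax:TreeDec2} and its vertex set is contained in $V(T_U)$ by definition of $T_U$. The task therefore reduces to chaining these subtrees together along a path in $G[U]$ and invoking axiom~\ref{ax:TreeDec1} at each link.

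Concretely, I would proceed as follows. First, pick witnesses $u_i \in U \cap \beta(t_i)$ so that $t_i \in V(T_{u_i})$ for $i=1,2$. Since $U$ is connected in $G$, there is a path $u_1 = w_0,w_1,\ldots,w_n = u_2$ in $G[U]$. For each $i \in \{0,\ldots,n-1\}$, the edge $w_iw_{i+1}$ belongs to $E(G)$, so by~\ref{ax:TreeDec1} some bag $\beta(s_i)$ contains both $w_i$ and $w_{i+1}$; hence $s_i \in V(T_{w_i}) \cap V(T_{w_{i+1}})$, so the union $T' := \bigcup_{i=0}^{n} T_{w_i}$ is a union of connected subgraphs of $T$ that overlap consecutively, and is therefore itself connected. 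Since $w_i \in U$ for all $i$, every node of $T'$ lies in $V(T_U)$, and $t_1,t_2 \in V(T')$. Thus $t_1$ and $t_2$ are in the same connected component of $T_U$.

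There is essentially no obstacle here; the only point that needs a moment of care is that $T_U$ is defined as an \emph{induced} subgraph of $T$, so the connecting path must use only $T$-edges whose both endpoints lie in $V(T_U)$. This is automatic because each $T_{w_i}$ is an induced connected subgraph of $T$ already contained in $T_U$, so the union $T'$ is an induced subgraph of $T_U$, and any path between $t_1$ and $t_2$ through $T'$ is a path in $T_U$.
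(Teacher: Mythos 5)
The paper states this lemma without proof, calling it ``a well known consequence from~\ref{ax:TreeDec2}''; your proposal supplies the standard textbook argument and it is correct. Chaining the subtrees $T_{w_i}$ along a path in $G[U]$ and using~\ref{ax:TreeDec1} to find a bag containing each consecutive pair $w_i, w_{i+1}$ is exactly the right way to glue the pieces together, and your final remark about $T_U$ being an induced subgraph correctly closes the last gap.
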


\section{\PreTreeDec, exactness and submodularity}\label{sec:predec}

Here we consider a definition of tree decompositions that is inspired by matroid tree decompositions.
We relax this definition into what we call a \preTreeDec.

\begin{definition}
	Let $G=(V(G),E(G))$ be a graph.
	Let $X\subseteq E(G)$.
	We define $\delta(X)\coloneqq\{v\in V(G)\mid \exists e\in X, e'\in E(G)\setminus X, v\in e\cap e' \}$.
	Let $\pi$ be a partition of $E(G)$.
	We define
	\[
	\delta(\pi)\coloneqq\{v\in V(G)\mid \exists X\in \pi, v\in\delta(X) \}.
	\]
	A tuple $(T,r,\beta,\gamma)$, where $(T,r)$ is a (rooted) tree, $\beta\colon V(T)\rightarrow 2^{V(G)}$ and $\gamma\colon \overrightarrow{E(T)}\rightarrow 2^{E(G)}$, is a \emph{(rooted) \preTreeDec} if:
	\begin{description}
		\item[\namedlabel{ax:preTreeRoot}{(PT1)}] $\beta(r)=\emptyset$ and for every connected component $C$ of $G$, there is a child $c$ of the root with $\gamma(r,c)=E(C)$.
		\item[\namedlabel{ax:preTreeLeaf}{(PT2)}] For every leaf $\ell\in L(T)$ with neighbour $t$, it holds that $|\gamma(t,\ell)|\leq 1$.
		\item[\namedlabel{ax:preTreePart}{(PT3)}] For every internal node $t\in V(T)\setminus L(T)$, we define $\pi_t\coloneqq(\gamma(t,t_1),\ldots,\gamma(t,t_d))$, where $N(t)=\{t_1,\ldots,t_d\}$ an arbitrary enumeration of the neighbours of $t$, and for a leaf $\ell \in L(T)$ with parent $p$ we define $\pi_\ell\coloneqq(\gamma(\ell,p),\complementOf{\gamma(\ell,p)})$.
		For every $t\in V(T)$, the tuple $\pi_t$ is a partition of $E(G)$ and $\beta(t)\supseteq \delta(\pi_t)$.
		\item[\namedlabel{ax:preTreeEdge}{(PT4)}] For every edge $st\in E(T)$, it holds that $\gamma(s,t)\cap \gamma(t,s) = \emptyset$.
	\end{description}
	We call an edge $st\in E(T)$ \emph{exact} if $\gamma(s,t)\cup\gamma(t,s)=E(G)$, we call $(T,r,\gamma,\beta)$ \emph{exact}, if every edge is exact and $\beta(t)= \delta(\pi_t)$, for all $t\in V(T)$.
	We call $\beta(t)$ the \emph{bag} at node $t$ and $\gamma(s,t)$ the \emph{cone} at edge $st$.
\end{definition}

\begin{observation}
	\label{obs:exact-subtree}
	Let $(T',r)$ be a subtree of $(T,r)$ with the same root. If all edges in $T'$ are exact then $\{\gamma(t,\ell)\mid \ell\in L(T'), t \text{ parent of } \ell\}$ is a partition of $E(G)$.
\end{observation}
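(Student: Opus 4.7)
The plan is to prove the observation by induction on $|V(T')|$, growing $T'$ one ``child block'' at a time from a minimal initial subtree, and maintaining the invariant that the collection $\{\gamma(p(\ell),\ell)\mid \ell\in L(T')\setminus\{r\}\}$ of leaf-edge cones (with $p(\ell)$ denoting the parent of $\ell$) is a partition of $E(G)$. Implicitly I view $T'$ as a subtree in which every internal node retains all of its $T$-children; otherwise the partition $\pi_t$ at an internal node of $T'$ would fail to be witnessed cleanly by the leaf-edge cones of $T'$.

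For the base case, I would take $T'_0$ to be the subtree consisting of the root $r$ together with all of its $T$-children $c_1,\ldots,c_d$. Its leaf-edge cones are exactly $\gamma(r,c_1),\ldots,\gamma(r,c_d)$, which assemble into the tuple $\pi_r$ and therefore partition $E(G)$ by \ref{ax:preTreePart}.

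For the inductive step, given $T'_n$ for which the invariant holds, I would pick a leaf $\ell\in L(T'_n)\setminus\{r\}$ that is not a leaf of $T$, and adjoin all its $T$-children $c_1,\ldots,c_m$ to obtain $T'_{n+1}$. The leaf-edge collection changes by removing $\gamma(p(\ell),\ell)$ and inserting $\gamma(\ell,c_1),\ldots,\gamma(\ell,c_m)$, so the invariant is preserved precisely when
\[
\gamma(p(\ell),\ell)=\gamma(\ell,c_1)\sqcup\cdots\sqcup\gamma(\ell,c_m).
\]
By \ref{ax:preTreePart} at $\ell$, the tuple $\pi_\ell=(\gamma(\ell,p(\ell)),\gamma(\ell,c_1),\ldots,\gamma(\ell,c_m))$ is a partition of $E(G)$. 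By exactness of the edge $p(\ell)\ell$ together with \ref{ax:preTreeEdge}, the cones $\gamma(\ell,p(\ell))$ and $\gamma(p(\ell),\ell)$ are disjoint and jointly cover $E(G)$, so $\gamma(\ell,p(\ell))=E(G)\setminus \gamma(p(\ell),\ell)$. Substituting this into the partition from $\pi_\ell$ yields the displayed identity, so the invariant survives.

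I do not anticipate a serious obstacle here: the argument is a routine induction that converts the local partition axiom \ref{ax:preTreePart} at internal nodes into a global partition along the leaf-edge frontier of $T'$. The only care needed is to keep the two oppositely oriented cones $\gamma(\ell,p(\ell))$ and $\gamma(p(\ell),\ell)$ on each tree edge properly distinguished, and to invoke exactness together with \ref{ax:preTreeEdge} at exactly the step where one needs to trade one for the complement of the other.
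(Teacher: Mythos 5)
Your proof is correct and is essentially the canonical argument for this statement, which the paper asserts without giving a proof: the base case is \ref{ax:preTreePart} at the root, and each inductive step replaces the cone $\gamma(t,\ell)$ by the cones from $\ell$ to its children, using \ref{ax:preTreePart} at $\ell$ together with exactness of the parent edge and \ref{ax:preTreeEdge} to rewrite $\gamma(\ell,t)=E(G)\setminus\gamma(t,\ell)$. Your explicit proviso that every internal node of $T'$ retains all of its $T$-children is in fact necessary rather than merely convenient: for an arbitrary rooted subtree the statement is false (take a node whose two leaf children carry cones that partition $E(G)$ and let $T'$ keep only one of them; all retained edges can be exact, yet the leaf cones of $T'$ miss part of $E(G)$), and this stronger reading is exactly how the paper uses the observation, namely with $T'=T$ in \cref{lem:tw-ptw}, while the depth analogue is applied to the trees $T_i$, which enjoy the same closure property. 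One small wording fix: in the inductive step you should expand a leaf of $T'_n$ that is internal in the \emph{target} subtree $T'$ (such a node may be internal in $T$ yet a leaf of the target), rather than one that is ``not a leaf of $T$''; with that adjustment the bookkeeping is exactly as you describe.
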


Similar to the definition of width and depth for tree decompositions we define the width and depth of a \preTreeDec.

\begin{definition}
	The \emph{width} of a partition $\pi$ of the edges of a graph is
	\[
	\wid(\pi)\coloneqq|\delta(\pi)|.
	\]
	The \emph{width} of a \preTreeDec\ is
	\[
	\wid(T,r,\beta,\gamma)\coloneqq\max_{t\in V(T)} |\beta(t)| - 1.
	\]
	The \emph{depth} of a rooted \preTreeDec\ is
	\[
	\dep(T,r,\beta,\gamma) \coloneqq \max_{t\in V(T)} \sum_{s\in P_t\setminus\{r\}} |\beta(s) \setminus \beta(p_s)|,
	\]
	where $P_t$ is the unique path from the root $r$ to $t$ and $p_s$ is the parent of $s$.
\end{definition}

The reader may note that the width of a \PreTreeDec\ only gets smaller if one sets $\beta(t)\coloneqq \delta(\pi_t)$, for all nodes $t\in V(T)$, but the depth can get larger.
We show that the width of a partition of the edges as defined above is submodular.
We need this property to show that our main construction does not enlarge the width of the \preTreeDec.

\begin{lemma}
	For every graph $G$, $\wid$ is \submod.
\end{lemma}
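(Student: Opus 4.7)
The plan is to decompose the width vertex-by-vertex and reduce submodularity to a local check at each vertex. For $v \in V(G)$ and a partition $\pi$ of $E(G)$, let $w_v(\pi) \in \{0, 1\}$ indicate whether the edges in $E_G(v)$ fail to all lie in a single part of $\pi$. By the definition of $\delta(X)$, one checks that $v \in \delta(\pi)$ iff $w_v(\pi) = 1$, and hence
\[
\wid(\pi) \;=\; |\delta(\pi)| \;=\; \sum_{v \in V(G)} w_v(\pi).
\]
Therefore it suffices to prove that each $w_v$ is \submod{} and sum over $v$.

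Fix $v \in V(G)$, partitions $P, Q \in \partitions(E(G))$, and parts $X \in P$, $Y \in Q$ with $X \cup Y \neq E(G)$; set $P' \coloneqq P_{X \ext \complementOf{Y}}$ and $Q' \coloneqq Q_{Y \ext \complementOf{X}}$. Since $w_v(\pi)$ depends only on the partition of $E_G(v)$ induced by $\pi$, and since restriction to $E_G(v)$ commutes with the extension operation (an immediate calculation from the explicit formula for $P_{X_i \ext F}$), it suffices to verify the inequality after intersecting everything with $E_G(v)$. To do so, classify each edge $e \in E_G(v)$ by whether it lies in $X \cap Y$, $X \setminus Y$, $Y \setminus X$, or $E(G) \setminus (X \cup Y)$, and trace its image in each of $P, Q, P', Q'$. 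In particular, the enlarged part $X \cup \complementOf{Y}$ of $P'$ absorbs every edge of $v$ except those in $Y \setminus X$, which are distributed among the small parts $Z \cap Y$ for $Z \in P \setminus \{X\}$ according to their original $P$-part; $Q'$ admits the dual description.

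With this explicit picture in hand, the inequality $w_v(P) + w_v(Q) \geq w_v(P') + w_v(Q')$ is verified by case analysis on $w_v(P) + w_v(Q)$. The case $\geq 2$ is immediate. When $w_v(P) + w_v(Q) = 0$, all edges of $v$ lie in a single $P$-part $Z$ and a single $Q$-part $W$; a short inspection of the four sub-cases (according to whether $Z = X$ and whether $W = Y$) shows in each that the edges of $v$ again lie in a single part of both $P'$ and $Q'$, so both indicators remain $0$. When $w_v(P) + w_v(Q) = 1$, say $w_v(Q) = 0$, the edges of $v$ lie in a single $Q$-part $W$ while spanning several $P$-parts; another short sub-case analysis (on whether $W = Y$ and whether the edges are in $X$ or not) yields $w_v(P') + w_v(Q') \leq 1$. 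Summing over all $v \in V(G)$ gives the claim.

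The main obstacle is purely the bookkeeping in the last two cases, but each sub-case collapses quickly once the image of each edge-type under the two extension operations is written down. As a side remark, the hypothesis $X \cup Y \neq E(G)$ is in fact superfluous for this inequality: when $X \cup Y = E(G)$ one has $\complementOf{Y} \subseteq X$ and $\complementOf{X} \subseteq Y$, so $P' = P$ and $Q' = Q$, and both sides coincide.
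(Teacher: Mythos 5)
Your proof is correct and rests on the same local, per-vertex analysis as the paper's: the key fact you verify -- that a vertex whose incident edges were in a single part of $P$ but are split in $P_{X\ext\complementOf{Y}}$ must have been split in $Q$ and becomes unsplit in $Q_{Y\ext\complementOf{X}}$ -- is exactly the paper's vertex-exchange step. The only difference is packaging: you sum $\{0,1\}$-valued per-vertex submodular indicators and do an exhaustive case analysis (which incidentally lets you skip the degenerate cases $X\in\{\emptyset,E(G)\}$ that the paper treats separately and drop the hypothesis $X\cup Y\neq E(G)$), whereas the paper argues directly with the boundary sets $\delta(\cdot)$.
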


\begin{proof}
	Let $P=\{X_1,\ldots,X_d\},Q=\{Y_1,\ldots,Y_d\}\in\partitions(E(G))$.
	We prove that \[\wid(P)+\wid(Q)\geq \wid(P_{X_1\ext \complementOf{Y_1}})+\wid(Q_{Y_1\ext \complementOf{X_1}}),\] which is enough to prove the lemma by symmetry.
	
	If $X_1=E(G)$, then $P=P_{X_1\ext \complementOf{Y_1}}$ and $Q=Q_{Y_1\ext \complementOf{X_1}}$, thus the lemma holds.
	
	If $X_1=\emptyset$, then $Q_{Y_1\ext\complementOf{X_1}}=(E(G),\emptyset,\ldots,\emptyset)$ and thus $\wid(Q_{Y_1\ext\complementOf{X_1}})=0$.
	Furthermore $\delta(P_{X_1 \ext \complementOf{Y_1}}) \subseteq \delta(P) \cup \delta(Y_1) \subseteq \delta(P) \cup \delta(Q)$ and thus $\wid(P_{X_1 \ext \complementOf{Y_1}}) \leq \wid(P) + \wid(Q)$, thus the lemma holds.
	
	If $Y_1=E(G)$ and $Y_1=\emptyset$ the lemma holds analogously.
	
	Thus let $\emptyset\neq X_1,Y_1\neq E(G)$.
	Trivially we get that $\delta(P_{X_1 \ext \complementOf{Y_1}}) \subseteq \delta(P) \cup \delta(Y_1)$ and $\delta(Q_{Y_1 \ext \complementOf{X_1}}) \subseteq \delta(Q) \cup \delta(X_1)$.
	Assume there exists some $v\in \delta(P_{X_1 \ext \complementOf{Y_1}})\setminus \delta(P)$, then $v\in \delta(Y_1)$ and thus $v\in \delta(Q)$.
	Furthermore we get that $E(v) \cap X_1=\emptyset$ and thus $E(v)\subseteq Y_1 \cup \complementOf{X_1}$.
	But then $v\notin \delta(Q_{Y_1 \ext \complementOf{X_1}})$.
	Analogously we can show that $\left(\delta(Q_{Y_1 \ext \complementOf{X_1}})\setminus \delta(Q)\right) \cap \delta(P_{X_1 \ext \complementOf{Y_1}})=\emptyset$.
	Thus all in all every vertex that is newly introduced to one of $\delta(P_{X_1 \ext \complementOf{Y_1}}), \delta(Q_{Y_1 \ext \complementOf{X_1}})$ is removed from the other and therefore the lemma holds.
\end{proof}

The next lemma shows that a \preTreeDec\ of a graph $G$ is indeed a relaxation of a tree decomposition of $G$.
If every edge is exact and all bags are exactly the boundary of the partition then we can construct a tree decomposition.
We need to start with a \preTreeDec\ of the graph $G^\circ$ with all self-loops added to ensure that every non-isolated vertex does appear in some bag and that the components correspondent to isolated vertices are covered by the \preTreeDec.
On the other hand we can transform a tree decomposition into a \preTreeDec, by copying the tree decomposition of each connected component of $G$ and adding leafs that correspond to the edges of $G^\circ$.

\begin{lemma}
	\label{lem:tw-ptw}
	Let $k,q\geq 1$.
	Let $G=(V,E)$ be a graph.
	Any tree decomposition of $G$ of width $\leq k-1$ and depth $\leq q$ gives rise to an exact \preTreeDec\ of $G^\circ$ of width $\leq k-1$ and depth $\leq q$ and vice versa.
\end{lemma}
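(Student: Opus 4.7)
The lemma asserts an equivalence between tree decompositions of $G$ and exact pre-tree decompositions of $G^\circ$ with matching width and depth bounds, and I would prove each direction constructively. For the forward direction, given a tree decomposition $(T,r,\beta)$ of $G$, I may assume after a harmless preprocessing that $\beta(r)=\emptyset$ and that each child-subtree of $r$ corresponds to exactly one connected component of $G^\circ$ (singleton components from isolated vertices become singleton subtrees). For every $e\in E(G^\circ)$, including self-loops, I choose the topmost node $t_e\in V(T)$ whose bag contains all endpoints of $e$; existence follows from \ref{ax:TreeDec1} and uniqueness of the topmost such node from \ref{ax:TreeDec2}. I attach a fresh leaf $\ell_e$ as child of $t_e$, set $\gamma(s,t)\coloneqq\{e\in E(G^\circ):\ell_e\text{ lies on the }t\text{-side of the tree-edge }st\}$ on every oriented tree-edge, and define $\beta'(t)\coloneqq\delta(\pi_t)$ everywhere, which makes the result exact by construction. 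For the reverse direction I keep the tree $(T,r)$, set $\beta'(t)\coloneqq\beta(t)$ on internal nodes, and on each leaf $\ell$ with $\gamma(p,\ell)=\{e\}$ I set $\beta'(\ell)$ equal to the set of endpoints of $e$ in $G^\circ$. Axiom \ref{ax:TreeDec1} then follows from exactness (every edge of $G$ is assigned to a unique leaf whose bag contains its endpoints) and from the fact that every vertex $v\in V(G)$ appears in the bag of the leaf carrying its self-loop $vv$.

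The crucial common ingredient is an auxiliary connectedness statement: in any exact pre-tree decomposition the set $\{t:v\in\beta(t)\}$ is connected. This I would prove by contradiction; if $s$ lies on the path between two such nodes $t_1,t_2$ but $v\notin\beta(s)$, then all $v$-incident edges sit on a single side of $s$, which forces all of them into one cone at $t_2$, contradicting $v\in\delta(\pi_{t_2})$. Combined with the topmost choice of $t_e$ via \ref{ax:TreeDec2}, this yields $\beta'(t)\subseteq\beta(t)$ on internal nodes in the forward direction and axiom \ref{ax:TreeDec2} for the constructed tree decomposition in the reverse direction. Width is then preserved immediately; the only caveat is that new leaves may have bags of size two, which stays within $k$ since $k=1$ already forces $G$ to be edgeless.

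The main bookkeeping obstacle is matching the two definitions of depth, which use superficially different formulas: a sum of $|\beta(s)\setminus\beta(p_s)|$ along a root-to-$t$ path for pre-tree decompositions versus the cardinality of $\bigcup_{s\in P_\ell}\beta(s)$ for tree decompositions. Once the connected-subtree property above is established, the two quantities coincide on every root-to-leaf path by a disjoint-union argument in which each vertex is charged to the unique highest node of its subtree. A residual subtlety in the reverse direction is that a self-loop leaf for an isolated vertex may contribute one vertex to the tree decomposition's path-union that was absent from the pre-tree decomposition's depth sum, but this extra contribution is absorbed by the hypothesis $q\geq 1$.
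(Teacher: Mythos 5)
Your proposal is correct and takes essentially the same route as the paper: one direction hangs a fresh leaf for every edge and self-loop of $G^\circ$ off a node whose bag contains its endpoints and sets every bag to $\delta(\pi_t)$, the other keeps the tree and only adjusts self-loop leaves of isolated vertices, with the connectivity of $\{t\mid v\in\beta(t)\}$ under exactness (the paper's Observation~\ref{obs:exact-subtree-depth}) reconciling the two depth formulas; the paper differs only in bookkeeping, assuming tightness of $\beta$ where you use topmost attachment points and allow empty leaf cones. Your one compressed step, that the isolated vertex added at a possibly deep self-loop leaf is ``absorbed by $q\geq 1$'', does hold, but only because \ref{ax:preTreeRoot}, \ref{ax:preTreePart} and \cref{obs:exact-path} force every cone on the root-to-leaf path to equal $\{vv\}$ and hence every bag on that path to be empty -- this is worth spelling out.
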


\begin{proof}
	Let $(T,r,\beta,\gamma)$ be an exact \preTreeDec\ of $G^\circ$ of width $\leq k$ and depth $\leq q$.
	We define $\beta'\colon V(T)\rightarrow 2^{V(G)}$ as follows
	\begin{equation*}
	\beta'(t)\coloneqq\begin{cases}
	\{v\} & \text{ if } t\in L(T), r \text{ parent of } t \text{ and } \gamma(r,t)=\{vv\},\\
	\beta(t) & \text{ otherwise.}
	\end{cases}
	\end{equation*}
	\begin{claim}
		$(T,\beta')$ is a tree decomposition of width $\leq k$ and depth $\leq q$.
	\end{claim}
	\begin{claimproof}
		From \ref{ax:preTreeRoot}, \ref{ax:preTreeLeaf} and \cref{obs:exact-subtree} we get that for every edge $uv\in E(G^\circ)$ there is some leaf $\ell$ with parent $p$ and $\gamma(p,\ell)=\{uv\}$.
		Thus if $u=v$, then $\beta'(t)=\{v\}$ and thus $vv\in E(G[\beta'(t)])$.
		Otherwise $uu,vv\in E(G^\circ)\setminus \{uv\}$ and thus $u,v\in\beta'(t)$ and $uv\in E(G[\beta'(t)])$.
		All in all we get that \ref{ax:TreeDec1} holds.
		
		Assume there exists a $v\in V(G)$ such that $T_v$ is not connected.
		Let $T_1,T_2$ be two disjoint connected components of $T_v$ and let $P=t_1,\ldots,t_a$ be the shortest $T_1$-$T_2$-path in $T$.
		Then $v\notin \delta(\gamma(t_1,t_2))\subseteq \delta(\pi_{t_2})$ and thus $E(v)\cap \gamma(t_1,t_2)=\emptyset$.
		As all edges in $P$ are exact it holds that $\gamma(t_1,t_2)\supseteq \gamma(t_a,s)$, for all $s\in N(t_a) \setminus\{t_{a-1}\}$.
		And thus it holds that $E(v)\cap \gamma(t_a,s)=\emptyset$ and $E(v)\subseteq \gamma(t_a,t_{a-1})$.
		This contradicts $v\in\bigcup_{s\in N(t_a)} \delta(\gamma(t_a,s))$.
		Therefore \ref{ax:TreeDec2} also holds and $(T,\beta')$ is a tree decomposition.
		
		The width and depth are obvious as $k,q\geq 1$.
	\end{claimproof}
	
	Now let $(T,r,\beta)$ be a tree decomposition of $G$ of width $\leq k$ and depth $\leq q$.
	W.l.o.g. $\beta$ is \emph{tight}, that is for all $t\in V(T)$ and $v\in \beta(t)$, that $(T,r,\beta')$, where $\beta'(t)\coloneqq\beta(t)\setminus\{v\}$ and $\beta'(s)=\beta(s)$, for all $s\in V(T)\setminus \{t\}$, is not a tree decomposition of $G$.
	We construct a new tree $T'$ with root $r'$ and functions $\beta'\colon V(T')\rightarrow 2^{V(G)}$, $\gamma\colon \overrightarrow{E(T')} \rightarrow 2^{E(G^\circ)}$, $f\colon V(T')\setminus\left(L(T')\cup\{r'\}\right)\rightarrow V(T)$ as follows.
	Let $C$ be a connected component of $G$ and let $V_C\coloneqq \{t\in V(T)\mid V(C)\cap \beta(t)\neq \emptyset\}$.
	By \cref{lem:tdec-conn} $V_C$ is connected.
	If $C$ is an isolated vertex $v$, then $V_C=\{t\}$, for some $t\in V(T)$.
	We add a new node $t_v$ to $T'$ and connect it to the root.
	We set $\beta'(t_v)=\emptyset$, $\gamma(r',t_v)=\{vv\}$ and $\gamma(t_v,r')= E(G^\circ)\setminus \{vv\}$.
	Otherwise let $T_C$ be a copy of the subtree induced by $V_C$ with root $r_C$ and vertices $V^*_C$ and $f|_{V^*_C}\colon V^*_C\rightarrow V_C$ the natural bijection between the copies and their originals.
	We attach $r_C$ to the root $r'$.
	For every $v\in V(C)$, there is some $t_v\in V_C$ such that $v\in\beta(t_v)$, as $C$ is not an isolated vertex.
	We add a new leaf $t'_v$ that we attach to $f|_{V(T_C)}^{-1}(t_v)$ and set $\beta'(t'_v)=\{v\}$, $\gamma(f|_{V^*_C}^{-1}(t_v),t'_v)=\{vv\}$ and $\gamma(t'_v,f|_{V^*_C}^{-1}(t_v))=E(G^\circ)\setminus \{vv\}$.
	For every $e\in E_G(C)$ there is some $t_e\in V_C$ such that $e\subseteq\beta(t_e)$.
	We add a new leaf $t'_e$ that we attach to $f|_{V^*_C}^{-1}(t_e)$ and set $\beta'(t'_e)=e$, $\gamma(f|_{V^*_C}^{-1}(t_e),t'_e)=\{e\}$ and $\gamma(t'_e,f|_{V^*_C}^{-1}(t_e))=E(G^\circ)\setminus \{e\}$.
	For every node $t\in V^*_C$ with parent $p$ we add all edges $e\in E_G(C)$, where $t'_e$ is a descendant of $t$, and all self-loops $vv\in E_{G^\circ}(C)$, where $t'_v$ is a descendant of $t$, to $\gamma(p,t)$.
	Furthermore we set $\gamma(t,p)\coloneqq E(G^\circ)\setminus \gamma(p,t)$ and $\beta'(t)\coloneqq \delta(\pi_t)\subseteq\beta(f(t))\cap V(C)$.
	By tightness of $\beta$ there is some $v\in \beta(f(\ell))$ such that $T_v=\{f(\ell)\}$, for every $\ell\in L(T_C)$, thus no leaf of $T_C$ is a leaf in $T'$, thus $(T',r',\beta',\gamma)$ satisfies \ref{ax:preTreeLeaf}.
	\ref{ax:preTreeRoot}, \ref{ax:preTreePart} and \ref{ax:preTreeEdge} hold by construction.
	Furthermore every edge is exact by construction.
	Thus $(T',r',\beta',\gamma)$ is an exact \preTreeDec\ of $G^\circ$.
	
	The width is obvious as every bag in $\beta'$ is a subset of some bag in $\beta$.
	To see that the depth bound also holds we observe two things.
	For every leaf $\ell\in L(T')$ with parent $p$ we get that $\beta'(\ell)\setminus\beta'(p)=\emptyset$.
	For every inner node $t\in V(T')\setminus L(T')$ with parent $p$ we get that $\beta'(t)\setminus\beta'(p)\subseteq \beta(f(t))$ and, if $p\neq r'$, $\beta'(t)\setminus\beta'(p)\subseteq \beta(f(t))\setminus \beta(f(p))$, by the tightness of $\beta$.
\end{proof}

The next observation can be seen by following the same arguments as in the proof of \cref{lem:tw-ptw}.

\begin{observation}
	\label{obs:exact-subtree-depth}
	Let $(T,r,\beta,\gamma)$ be a pre-decomposition of some graph $G$.
	Let $(T',r)$ be a subtree of $(T,r)$ with the same root.
	If all edges in $T'$ are exact then, for every $v\in V(G)$, the set 
	$\{t\in V(T')\mid v\in \delta(\pi_s)\}$ is connected in $T'$.
	In particular every $t\in V(T')$ satisfies
	\begin{equation*}
		\sum_{s\in P_t\setminus \{r\}} |\delta(\pi_s) \setminus \delta(\pi_{p_s})|=|\bigcup_{s\in P_t} \delta(\pi_s)|.
	\end{equation*}
\end{observation}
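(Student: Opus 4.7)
\emph{Plan overview.} The observation has two parts: the connectivity of the set $S_v \coloneqq \{t \in V(T') \mid v \in \delta(\pi_t)\}$ in $T'$ for each $v \in V(G)$, and the resulting telescoping identity along the root-path $P_t$. As the author indicates, both follow by replaying arguments from the proof of \cref{lem:tw-ptw}; I would first re-do the connectivity argument inside the subtree $T'$ and then derive the identity by a short counting observation.

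\emph{Step 1: connectivity.} Assuming for contradiction that $S_v$ is disconnected in $T'$ for some $v$, I would pick two connected components and a shortest $T'$-path $t_1, t_2, \ldots, t_a$ between them, with $t_1, t_a \in S_v$ and $t_j \notin S_v$ for $1 < j < a$. The first move is to show $E(v) \cap \gamma(t_1, t_2) = \emptyset$: a short case analysis using exactness of $t_1 t_2$ (so that $\gamma(t_1, t_2)$ and $\gamma(t_2, t_1)$ are complementary) rules out both $v$ having edges in both $\gamma(t_1, t_2)$ and $\gamma(t_2, t_1)$ (which would contradict $v \notin \delta(\pi_{t_2})$) and $E(v) \subseteq \gamma(t_1, t_2)$ (which, since $\pi_{t_1}$ is a partition by \ref{ax:preTreePart}, would contradict $v \in \delta(\pi_{t_1})$). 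I would then propagate this along the exact path: exactness of $t_{j-1} t_j$ together with $\pi_{t_j}$ being a partition yields $\gamma(t_j, s) \subseteq \gamma(t_{j-1}, t_j)$ for every neighbour $s \neq t_{j-1}$ of $t_j$ in $T$, so iterating gives $\gamma(t_1, t_2) \supseteq \gamma(t_a, s)$ for all $s \in N(t_a) \setminus \{t_{a-1}\}$. This forces $E(v) \subseteq \gamma(t_a, t_{a-1})$, contradicting $v \in \delta(\pi_{t_a})$.

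\emph{Step 2: identity.} Having established connectivity, for each $v$ the set $\{s \in P_t \mid v \in \delta(\pi_s)\}$ is a connected subset of the path $P_t$, hence a contiguous subpath. If non-empty, let $s^*(v)$ denote its first node along $P_t$; since $\delta(\pi_r) \subseteq \beta(r) = \emptyset$ by \ref{ax:preTreeRoot}, we have $s^*(v) \neq r$. Then $v$ contributes exactly $1$ to $|\delta(\pi_{s^*(v)}) \setminus \delta(\pi_{p_{s^*(v)}})|$ and $0$ to every other increment in the sum, so summing over all $v \in \bigcup_{s \in P_t} \delta(\pi_s)$ recovers both sides of the claimed equation.

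\emph{Main obstacle.} Since the proof is a direct adaptation of an established argument, I do not expect any serious obstacle. The only subtlety is that $\pi_t$ is defined using neighbours of $t$ in $T$ rather than in $T'$, but this is harmless: exactness is propagated only along edges of the shortest path (which lie in $T'$), and the final contradiction at $t_a$ naturally exhausts all parts of $\pi_{t_a}$, i.e., all of $N(t_a)$.
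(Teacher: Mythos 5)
Your proof is correct and takes the same route the paper indicates for this observation, which is simply to replay the connectivity argument from the proof of Lemma~\ref{lem:tw-ptw} (propagating $E(v)\cap\gamma(t_1,t_2)=\emptyset$ along the exact path via \cref{obs:exact-path} to contradict $v\in\delta(\pi_{t_a})$) and then read off the telescoping identity from contiguity on the path, using $\delta(\pi_r)=\emptyset$. The only cosmetic nit: the inclusion $\delta(\pi_r)\subseteq\beta(r)$ is by \ref{ax:preTreePart}, while \ref{ax:preTreeRoot} supplies $\beta(r)=\emptyset$.
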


We conclude this section with an observation about the cones along a path of exact edges.
It is a direct consequence of exactness and the fact that the cones incident to a vertex form a partition of the edges.

\begin{observation}
	\label{obs:exact-path}
	Let $(T,r,\beta,\gamma)$ be a pre-decomposition of some graph $G$.
	Let $P=t_1,\ldots,t_\ell$ a path in $T$, such that every edge $t_it_{i+1}$, for $i\in[\ell-1]$, is exact.
	Then it holds that
	$\gamma(t_1,t_2) \supseteq \gamma(t_2,t_3) \supseteq \ldots \supseteq \gamma(t_{\ell-1},t_\ell)$.
\end{observation}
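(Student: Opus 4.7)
The plan is to establish the chain of inclusions link by link, showing that for each $i\in\{1,\dots,\ell-2\}$ we have $\gamma(t_{i+1},t_{i+2})\subseteq \gamma(t_i,t_{i+1})$. Since this is a purely local statement about two consecutive edges along the path, the full chain then follows by induction on $i$.

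For the single-step inclusion I would combine three ingredients. First, exactness of the edge $t_it_{i+1}$ gives $\gamma(t_i,t_{i+1})\cup \gamma(t_{i+1},t_i)=E(G)$, and \ref{ax:preTreeEdge} gives disjointness $\gamma(t_i,t_{i+1})\cap \gamma(t_{i+1},t_i)=\emptyset$. Hence the two cones are exact complements: $\gamma(t_{i+1},t_i)=E(G)\setminus \gamma(t_i,t_{i+1})$. Second, the node $t_{i+1}$ has both $t_i$ and $t_{i+2}$ among its neighbours and therefore has degree at least two in $T$, so it is an internal node, and the ``partition'' form of \ref{ax:preTreePart} applies at $t_{i+1}$: the cones $\gamma(t_{i+1},s)$ for $s\in N(t_{i+1})$ form a partition of $E(G)$. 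In particular $\gamma(t_{i+1},t_{i+2})$ and $\gamma(t_{i+1},t_i)$ are disjoint. Putting these together,
\[
\gamma(t_{i+1},t_{i+2})\;\subseteq\; E(G)\setminus \gamma(t_{i+1},t_i)\;=\;\gamma(t_i,t_{i+1}),
\]
which is the desired inclusion.

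There is no real obstacle: the argument is a direct composition of \ref{ax:preTreePart}, \ref{ax:preTreeEdge}, and the definition of exactness, exactly as hinted in the sentence preceding the observation. The only point worth spelling out is why \ref{ax:preTreePart} is applied in its partition form rather than the two-piece form reserved for leaves, namely that every intermediate node $t_{i+1}$ on the path is automatically an internal node of $T$.
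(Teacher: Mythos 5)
Your proof is correct and matches the paper's intent exactly: the observation is stated in the paper with the one-sentence justification that it is ``a direct consequence of exactness and the fact that the cones incident to a vertex form a partition of the edges,'' and your link-by-link argument (exactness plus \ref{ax:preTreeEdge} making $\gamma(t_{i+1},t_i)$ the exact complement of $\gamma(t_i,t_{i+1})$, then disjointness within $\pi_{t_{i+1}}$ from \ref{ax:preTreePart} at the internal node $t_{i+1}$) is precisely that reasoning spelled out.
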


\section{The game}\label{sec:game}

In the cops-and-robber game on a graph $G$, the cops occupy sets $X$ of at most $k$ vertices of $G$, and the robber moves on edges of $G$. 
In order to make the rules precise, we need \emph{edge components} of $G$ that arise when the cops are blocking a set $X$.

\begin{definition}
	Let $G=(V,E)$ be a graph and $X\subseteq V$. We let \emph{the edge component graph} of $G$ with respect to $X$ be the graph $G^{X}$ obtained as the disjoint union of the following graphs. (In order to make all graphs disjoint we introduce copies of vertices where needed.)
	\begin{itemize}
		\item For every $uv\in E(G[X])$, the graph $G_{uv}\coloneqq(\{u,v\},\{uv\})$, and
		\item for every connected component $C$ of $G\setminus X$, the graph $G_C$, with 
			$V(G_C)\coloneqq V(C)\cup N_G(V(C))$ and $E(G_C)\coloneqq  E'(C)\cup E(V(C),X)$, 
			where $E(V(C),X)$ is the set of edges of $G$ incident to both a vertex of $C$ and a vertex in $X$.
	\end{itemize}
\end{definition}

The reader may note that $G^X$ may contain multiple copies of the vertices in $X$, but exactly one copy of each edge in $G$.

\begin{observation}
	There is a natural bijection $\Psi\colon E(G^X)\rightarrow E(G)$ between the edges of $G^X$ and the edges of $G$.
\end{observation}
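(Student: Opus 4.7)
The plan is to construct $\Psi$ explicitly from the definition of $G^X$ as a disjoint union, and then verify bijectivity by a straightforward case analysis on the edges of $G$. For each edge $e$ of $G^X$, its position in the disjoint union already tells us which edge of $G$ it came from: either it is the unique edge of some component $G_{uv}$ with $uv \in E(G[X])$, in which case we set $\Psi(e) = uv$; or it lies in some $G_C$ for $C$ a connected component of $G \setminus X$, and then by construction $e$ is either an edge of $C$ (in $G$) or an edge of $G$ between $V(C)$ and $X$, and we send $e$ to that edge. So $\Psi$ is well-defined.

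For surjectivity and injectivity, I would partition $E(G)$ into three classes based on how many endpoints lie in $X$. If $e = uv$ with $u,v \in X$, then $e \in E(G[X])$, so $e$ appears exactly once in $G^X$ through the single component $G_{uv}$, and in no $G_C$ (since the edge sets $E(C) \cup E(V(C),X)$ used for the components contain no edge with both endpoints in $X$). If $u,v \in V \setminus X$, connectedness forces $u$ and $v$ to lie in the same component $C$ of $G \setminus X$, so $e$ appears once in $E(C) \subseteq E(G_C)$, and in no other $G_{C'}$ (distinct components are disjoint) nor in any $G_{u'v'}$. Finally, if exactly one endpoint, say $u$, lies outside $X$, then $u$ determines a unique component $C$ of $G \setminus X$, and $e$ appears once in $E(V(C),X) \subseteq E(G_C)$ and nowhere else.

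Combining the three cases, every edge of $G$ has exactly one preimage under $\Psi$, giving the bijection. The only subtlety — and the closest thing to an obstacle — is keeping the bookkeeping honest in the mixed case: one must check that although the endpoints in $X$ are duplicated across different $G_{uv}$ and $G_C$, the \emph{edges} of $G^X$ are not, because each such edge $uv$ with $u \notin X$ is attributed to the component $C \ni u$ via the set $E(V(C),X)$ and to no other piece of the disjoint union. Once this is verified, no further argument is needed, and the observation follows directly from how the disjoint union is assembled.
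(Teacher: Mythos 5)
Your proof is correct and is essentially the argument the paper leaves implicit: the observation is stated without proof, justified only by the preceding remark that $G^X$ duplicates vertices of $X$ but contains exactly one copy of each edge of $G$, which is precisely the three-case bookkeeping (both endpoints in $X$, both outside, mixed) that you spell out. No gap; your verification that a mixed edge is attributed only to the component of its endpoint outside $X$ is the right point to check.
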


\begin{definition}[$q$-placement $k$-cops-and-robber game]
\label{def:Ekq-cops}
Let $G$ be a graph and let $k,q\geq 1$.
The \emph{$q$-placement $k$-cops-and-robber game} $\CRkq$ is defined as follows:

We have a counter $j$ that indicates how many times cops are placed on the graph.
If $G$ does not contain any edges the cop player wins immediately.

\begin{itemize}
	\item We initialize the counter $j=0$.
	\item The cop positions are sets $X\in V(G)^{\leq k}$.
	\item The robber position is an edge $uv\in E(G)$.
	\item The initial position $(X_0,u_0v_0)$ of the game is $X_0 = \emptyset$ and $u_0v_0\in E(G)$, thus the game starts with no cops positioned on $G$ and the robber on an arbitrary edge in a connected component of $G$ of his choice.
	\item For $X\subseteq V(G)$ and $uv\in E(G)$, we write $\gamma^X_{uv}\coloneqq\Psi(E(C))$ for the component $C$ of the graph $G^X$, such that $uv\in E(\gamma^X_{uv})$.
	Thus if the cops are at positions $X$ and robber at an edge $uv$ we write $(X,\gamma^X_{uv})$ for the position of the game.
	\item In round $i$ the cop-player can move from the set $X_{i-1}$ to a set $X_{i}$, if $X_{i}\subseteq X_{i-1}$,
		or if $j<q$, the cop-player can pick a vertex $v\in V(G)$ and play to $X_{i}\coloneqq X_{i-1}\cup\{v\}$ and increase $j$ by one.
	\item In round $i$ the robber-player can move along a path with no internal vertex in $X_{i-1}\cap X_{i}$. 
		Thus the robber-player can move to some edge $u_iv_i$, such that the edge $\Psi^{-1}(u_i v_i)$ is in a connected component of $G^{X_i}$ that is contained in $\Psi^{-1}(\gamma^{X_{i-1}\cap X_i}_{u_{i-1}v_{i-1}})$ via a path $p=w_1,\ldots,w_\ell$ where $\{w_1,w_2\}=\{u_{i-1},v_{i-1}\}$ and $\{w_{\ell-1},w_\ell\}=\{u_i,v_i\}$ and $\{w_2,\ldots,w_{\ell-1}\}\cap X_i \cap X_{i-1}=\emptyset$.
	\item The cop-player wins in round $i$, if $\{u_i,v_i\}\subseteq X_i$, and we say the cop-player \emph{captures} the robber in round $i$. The robber-player wins if the cop-player has not won and $j=q$.
\end{itemize}

	If we further restrict the movement of the cops such that always $\gamma_{u_{i-1}v_{i-1}}^{X_{i-1}}\supseteq \gamma_{u_{i-1}v_{i-1}}^{X_{i-1}\cap X_{i}}$ holds, we write \monCRkq\ and call the game \emph{monotone $q$-placement $k$-cops-and-robber game}.
\end{definition}

The game played on the graph $G^\circ$ corresponds to the game on $G$, where the robber can hide both inside a vertex or an edge.
It is easy to see that this does not benefit the robber player, that is he wins the game $\CRkq$ if and only if he wins the game $\CRkq[G^\circ]$, as the components that are reachable by the robber are essentially the same.
In \cite{FluckSS24}, the authors introduce a cops-and-robber game, where the robber can only hide in the vertices.
Again this does not pose a restriction for the robber with the same argument as above.
There is a tight connection between the cops-and-robber game defined above and tree decompositions of graphs.

\begin{lemma}[\cite{FluckSS24}]
	\label{lem:moncr-vs-tkq}
	Let $G$ be a graph and $k,q\in\NN$.
	The cop player wins $\monCRkq$ if and only if $G\in\mathcal{T}^k_q$.
\end{lemma}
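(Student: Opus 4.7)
The plan is to establish both implications by directly translating between tree decompositions and monotone strategies.

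For the forward direction, I assume $G\in\mathcal{T}^k_q$ and fix a witnessing tree decomposition $(T,r,\beta)$ of width $\leq k-1$ and depth $\leq q$. The cops' strategy is to descend from the root of $T$, always occupying the bag at the current node. After the cops sit on $\beta(t)$, the robber is confined to some connected component $C$ of $G\setminus\beta(t)$; by \cref{lem:tdec-conn}, the nodes of $T$ whose bags meet $V(C)$ form a subtree attached to $t$ through a unique child $t'$, so the cops slide from $\beta(t)$ to $\beta(t')$. Each bag has at most $k$ vertices, so the cop count is never exceeded. Monotonicity holds because any vertex $v\in\beta(t)\setminus\beta(t')$ cannot appear in any bag strictly below $t'$ (by connectivity of $T_v$), and hence no edge incident to $v$ is available in the subtree below $t'$ where the robber now lives; releasing the cop from $v$ thus cannot enlarge his edge component. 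Finally, the number of placements along any root-to-leaf descent equals $|\bigcup_{s\in P_\ell}\beta(s)|\leq q$ by definition of depth, since each vertex needs to be placed only once before moving on.

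For the converse, I would unfold a monotone winning strategy into a rooted tree whose nodes record the cop set at each moment, with children corresponding to the cop responses to the robber's distinct escape components. Letting $\beta(t)$ be the cop set at node $t$, width $\leq k-1$ is immediate. Property \ref{ax:TreeDec1} follows from the winning condition: the robber must eventually be caught on every edge he could have started on, so every edge lies in some bag. Property \ref{ax:TreeDec2} is exactly what monotonicity secures: once a vertex $v$ is dropped from the cop set along a play, the robber's edge component must already lie outside $v$'s incidence region by $\gamma_{uv}^{X_{i-1}}\supseteq\gamma_{uv}^{X_{i-1}\cap X_i}$, and a canonical form of the strategy never re-places $v$ during the same play; this makes $T_v$ a connected subtree. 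The depth bound $\leq q$ then comes from the $q$-placement constraint: along any play, the set of distinct vertices ever placed equals $\bigcup_{s\in P_\ell}\beta(s)$ by the no-re-seizing invariant, and there are at most $q$ placements.

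The delicate step is the reverse direction, in particular extracting a coherent tree decomposition from the game tree despite the branching on the robber's choices. Without monotonicity this extraction is not well-defined, since a vertex could be placed, removed, and re-placed, yielding a disconnected $T_v$. Monotonicity, together with normalising the strategy to avoid redundant re-placements, closes this gap by forcing each vertex's occurrence to span a single connected subtree. Once this structural skeleton is in place, the width and depth bounds reduce to straightforward accounting against the cop and placement budgets, respectively.
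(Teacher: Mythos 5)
The paper cites this lemma from \cite{FluckSS24} and gives no proof of its own (it even notes that \cite{FluckSS24} formulates the game with the robber hiding at vertices rather than on edges), so there is no in-paper argument to compare against; your proposal has to be judged on its own merits.

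Your forward direction is essentially sound. Descending bag by bag, a transition from $\beta(t)$ to $\beta(t')$ is implemented by first contracting to $\beta(t)\cap\beta(t')$ (one removal move) and then adding the vertices of $\beta(t')\setminus\beta(t)$ one by one. Each released vertex $v\in\beta(t)\setminus\beta(t')$ appears in no bag strictly below $t'$ by connectivity of $T_v$, hence is non-adjacent to the robber's component, so its removal leaves $\gamma^{X}_{uv}$ unchanged; and the total number of placements along a root-to-leaf play equals $|\bigcup_{s\in P_\ell}\beta(s)|\leq q$ precisely because of that same connectivity.

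The reverse direction, however, has a genuine gap in establishing \ref{ax:TreeDec2}. You argue that normalising the strategy to avoid re-placing a vertex $v$ along a single play makes $T_v$ connected. But that only controls occurrences of $v$ along one root-to-leaf path; it does not prevent $v$ from being placed \emph{afresh} on two different branches below a common ancestor $u$ with $v\notin\beta(u)$, which would leave $T_v$ disconnected. Closing this requires a further, independent normalisation: the cop player should only ever place a cop inside the robber's current escape component $C'$ of $G\setminus X$. Such a restriction is harmless, since $N(V(C'))\setminus V(C')\subseteq X$, so a placement at $v\notin V(C')\cup X$ leaves $\gamma^{X}_{uv}$ unchanged and can be deleted from the strategy without affecting the outcome or the placement count. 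With this in place, sibling subtrees of $u$ correspond to disjoint components of $G\setminus\beta(u)$, so a fresh placement of $v$ can occur in at most one of them and $T_v$ is indeed connected. Your sketch invokes only the ``no re-placement'' half, whose justification (the robber's component already avoids $v$) does not address these first-time placements that cause the branching problem.
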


Towards strengthening the above connection to also include the non-monotone game we first introduce how to construct a \preTreeDec\ from a winning strategy of the cop player.

\begin{definition}[strategy tree]
\label{def:strat-tree}
Let $G$ be a graph without isolated vertices and let $k,q\in\NN$.
Let $\sigma\colon V(G)^{\leq k}\times E(G)\rightarrow V(G)^{\leq k}$ a cop strategy such that that for all $X\in V(G)^{\leq k}$, for all $uv\in E(G)$ and for all $u'v'\in\gamma_{uv}^X$ we have that $\sigma(X,uv)=\sigma(X,u'v')$.
We write $\sigma(X,\gamma_{uv}^X)$ instead of $\sigma(X,uv)$.

The \emph{strategy tree of $\sigma$} is a \preTreeDec\ $(T,r,\beta,\gamma)$, inductively defined as follows:
\begin{itemize}
	\item $\beta(r)=\emptyset$,
	\item for every connected component $C$ of $G$, there is a child $c$ of the root $r$ and $\gamma(r,c)=E(C)$,
	\item for every node $t\in V(T)\setminus\{r\}$ with parent $s\in V(T)$,
	\begin{itemize}
		\item if the robber player is caught, we set $\beta(t)= e$, where $\gamma(s,t)=\{e\}$,
		\item else $\beta(t)=\sigma(\beta(s),\gamma(s,t))$ and
		\item for every connected component $C$ of $G^{\beta(t)}$, that has a non-empty intersection with $\Psi^{-1}(\gamma(s,t))$, there is a child $c$ of $t$ and $\gamma(t,c)=\Psi(E(C))$,
			
		\item $\gamma(t,s)\coloneqq E(G)\setminus \bigcup_{c \text{ child of }t}\gamma(t,c)$, if $t\notin L(T)$, and
		\item $\gamma(t,s)\coloneqq E(G)\setminus\gamma(s,t)$, if $t\in L(T)$.
	\end{itemize}
\end{itemize}
We call $t\in V(T)$ a \emph{branching node} if the cop player placed a new cop incident to the robber escape space.

Observe that if $t\in V(T)$ is a leaf, then the robber is captured and the depth of $(T,r,\beta,\gamma)$ is $\leq q$ if and only if $\sigma$ is a winning strategy in \CRkq.
\end{definition}

Note that w.l.o.g. every child of the root is a branching node, as the cop player w.l.o.g. only plays positions that are inside the component the robber chose in the first round.
If the game is played on $G^\circ$, then every branching node that does not correspond to the placement of a cop onto an isolated vertex has more than one child.
We observe that the monotone moves of the cop player correspond to the exact edges in the strategy tree.

\begin{observation}
	\label{obs:monotone-exact}
	For edge $st \in E(T)$, where $s$ is the parent of $t$ it holds that the move $\sigma(\beta(s),\gamma(s,t))$ is monotone if and only if $st$ is exact.
	Moreover, if $st$ is not exact, then $\beta(t) \subsetneq \beta(s)$.
\end{observation}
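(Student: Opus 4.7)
The plan is a case analysis on the cop's action at $t$. If $t$ is a leaf, then $\gamma(t,s)=E(G)\setminus\gamma(s,t)$ by construction, so $st$ is automatically exact, and the capture move is trivially monotone.

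For internal $t$, the first observation I would record is that, by the strategy-tree construction, the child cones $\{\gamma(t,c)\}_c$ are exactly the edge sets of the components of $G^{\beta(t)}$ that intersect $\Psi^{-1}(\gamma(s,t))$, while $\gamma(t,s)$ collects the remaining components. Consequently, $st$ is exact iff $\bigcup_c\gamma(t,c)\subseteq\gamma(s,t)$, i.e.\ iff no $G^{\beta(t)}$-component straddles $\gamma(s,t)$ and its complement in $E(G)$.

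The main split is then on the type of cop round producing $\beta(t)$ from $\beta(s)$. If $\beta(t)\supseteq\beta(s)$ (placement round), then $G^{\beta(t)}$ refines $G^{\beta(s)}$, so any $G^{\beta(t)}$-component meeting the $G^{\beta(s)}$-component $\gamma(s,t)$ is contained in $\gamma(s,t)$; exactness follows, and the move is monotone because adding cops never enlarges the robber's area. If $\beta(t)\subseteq\beta(s)$ (removal round), then $\beta(s)\cap\beta(t)=\beta(t)$ and $G^{\beta(t)}$ coarsens $G^{\beta(s)}$, so the entire $G^{\beta(s)}$-component $\gamma(s,t)$ sits inside a unique $G^{\beta(t)}$-component $\gamma^{\beta(t)}_{uv}$ for any $uv\in\gamma(s,t)$. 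Hence this is the sole child cone, $\bigcup_c\gamma(t,c)=\gamma^{\beta(t)}_{uv}$, and since the inclusion $\gamma(s,t)\subseteq\gamma^{\beta(t)}_{uv}$ always holds, exactness collapses to $\gamma^{\beta(t)}_{uv}=\gamma(s,t)$. This equality is precisely the monotonicity condition $\gamma^{\beta(s)}_{uv}\supseteq\gamma^{\beta(s)\cap\beta(t)}_{uv}$ for this edge, so the two notions coincide.

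The \emph{moreover} statement then falls out of the same dichotomy: a placement round always yields an exact edge, so a non-exact $st$ must come from a removal round, giving $\beta(t)\subseteq\beta(s)$; strictness is forced because $\beta(t)=\beta(s)$ would trivially give exactness. The only subtlety is to justify that the case split is exhaustive, i.e.\ that each strategy-tree edge corresponds to a single cop round of the game (either one placement or a subset removal) rather than to a compound move in which cops are both added and removed -- this is the natural reading of $\sigma$ given the game rules, and with that convention the whole argument reduces to routine set-inclusion reasoning about edge components.
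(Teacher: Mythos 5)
Your case analysis (leaf, placement round, removal round) is correct and is the natural unpacking of this observation, which the paper states without proof. Your reduction of exactness to ``no $G^{\beta(t)}$-component straddles $\gamma(s,t)$ and its complement,'' together with the refinement/coarsening dichotomy for $G^{\beta(t)}$ versus $G^{\beta(s)}$, is sound; and the subtlety you flag is indeed resolved by the game rules of Definition~\ref{def:Ekq-cops}, which permit in a single round either one placement or a subset removal but not both, so $\beta(t)\supseteq\beta(s)$ or $\beta(t)\subseteq\beta(s)$ always holds.
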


The following two observations about the self-loops in the graph $G^\circ$ are key to prove the construction in the next section does not enlarge the depth of the \preTreeDec.

\begin{observation}
	\label{obs:self-loops-gamma}
	When considering the game on $G^\circ$, all self-loops $vv$ incident to $\beta(s)$ are either contained in $\gamma(s,p)$ or there is a child $c$ of $s$ such that $\gamma(s,c)=\{vv\}$.
\end{observation}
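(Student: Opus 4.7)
The plan is to unpack the construction of the strategy tree and reduce the statement to a one-step case split. Write $p$ for the parent of $s$; the case $s=r$ is vacuous because $\beta(r)=\emptyset$, so there are no self-loops incident to $\beta(s)$ to check. Fix a self-loop $vv$ with $v\in\beta(s)$.

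The key preparatory observation is that $G_{vv}=(\{v\},\{vv\})$ appears as its own connected component of the edge component graph $G^{\beta(s)}$. Indeed, since $v\in\beta(s)$ we have $vv\in E(G^\circ[\beta(s)])$, so by the very definition of $G^{\beta(s)}$ the self-loop contributes the standalone copy $G_{vv}$, disjoint from all other components, with $\Psi(E(G_{vv}))=\{vv\}$.

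With this in hand I split on whether $vv\in\gamma(p,s)$. If $vv\in\gamma(p,s)$, then $G_{vv}$ has non-empty intersection with $\Psi^{-1}(\gamma(p,s))$, and the construction of the strategy tree in \cref{def:strat-tree} creates a child $c$ of $s$ with $\gamma(s,c)=\Psi(E(G_{vv}))=\{vv\}$, which is the second alternative of the observation. If instead $vv\notin\gamma(p,s)$, then $G_{vv}$ does not intersect $\Psi^{-1}(\gamma(p,s))$, so no child of $s$ has $vv$ in its cone. Assuming $s$ is internal, $\gamma(s,p)=E(G)\setminus\bigcup_{c\text{ child of }s}\gamma(s,c)$ by definition, and therefore $vv\in\gamma(s,p)$, the first alternative. (If $s$ happens to be a leaf, then $\beta(s)$ consists of the endpoints of the captured edge $e$ with $\gamma(p,s)=\{e\}$, and any self-loop $vv$ incident to $\beta(s)$ with $vv\neq e$ lies in $\gamma(s,p)=E(G)\setminus\{e\}$, so the statement holds trivially.)

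There is no real obstacle here; the argument is purely definitional bookkeeping. The only point worth highlighting is the preparatory observation that a self-loop incident to the cop set $\beta(s)$ always forms a one-edge component of $G^{\beta(s)}$ on its own. Once this is noted, the dichotomy in the observation aligns exactly with the dichotomy used in \cref{def:strat-tree} to decide which components of $G^{\beta(s)}$ become cones to children of $s$.
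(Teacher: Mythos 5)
The paper states this as an observation with no written proof, so there is nothing to compare against verbatim; your argument is the natural proof and it reconstructs the reasoning the authors clearly intended. The decisive ingredient you identify --- that a self-loop $vv$ with $v\in\beta(s)$ always spawns the singleton component $G_{vv}$ of $G^{\beta(s)}$, because $vv\in E(G^\circ[\beta(s)])$, and that $\Psi$ bijects edges, so $vv$ cannot land in any other child cone --- is exactly what is needed, and the dichotomy on $vv\in\gamma(p,s)$ then matches the case split in \cref{def:strat-tree} for which components become child cones. The internal-node case, which is the only case the paper actually uses (cf.\ \cref{cl:depth-leafs-new}, where the node is known to have at least two children), is fully correct.

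One small caveat in your parenthetical about leaves: you write that for a leaf $s$ with $\gamma(p,s)=\{e\}$ the statement ``holds trivially'', but you only treat self-loops $vv\ne e$. If $e$ itself is a self-loop $vv$, then $\beta(s)=\{v\}$, $vv\notin\gamma(s,p)=E(G^\circ)\setminus\{vv\}$, and $s$ has no children, so neither alternative of the observation holds. This corner case can occur (a leaf is created precisely when a singleton cone, possibly $\{vv\}$, reaches a node whose bag covers the edge), so the observation as literally stated is slightly overclaimed for leaves. Since the paper only invokes the observation at internal nodes, this does not affect anything downstream, but it would be cleaner for you to either restrict the observation to internal nodes explicitly or simply acknowledge the exceptional leaf rather than asserting triviality.
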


\begin{observation}
	\label{obs:self-loops}
	Let $v\in V(G)$ be a non-isolated vertex and the game played on $G^\circ$.
	A node $s$ has a child $c$ with $\gamma(s,c)=\{vv\}$, for some self-loop $vv$ if and only if $s$ is a branching node and $v$ is the vertex the cops picked.
\end{observation}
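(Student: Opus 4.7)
The plan is to unpack both directions directly from the strategy-tree construction on $G^\circ$, combined with the definition of the edge component graph. The fulcrum of the argument is the following elementary fact about self-loops: for any $X\subseteq V(G)$, the self-loop $vv$ forms its own connected component $G_{vv}$ of $(G^\circ)^X$ precisely when $v\in X$; otherwise $vv$ sits in the same component of $(G^\circ)^X$ as every other edge of $G^\circ$ incident to $v$.

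For the forward direction, suppose $s$ has a child $c$ with $\gamma(s,c)=\{vv\}$. If $s$ were the root then $\gamma(s,c)$ would be the edge set of an entire connected component of $G^\circ$, which cannot equal $\{vv\}$ because $v$ is non-isolated in $G$; hence $s$ has a parent $p$. Since $s$ has a child it is not a leaf, so the robber-caught clause of Definition~\ref{def:strat-tree} does not apply at $s$, and $\gamma(s,c)=\Psi(E(C))$ for some component $C$ of $(G^\circ)^{\beta(s)}$ with $E(C)\cap\Psi^{-1}(\gamma(p,s))\neq\emptyset$. The equality $\Psi(E(C))=\{vv\}$ forces $C=G_{vv}$, which gives $v\in\beta(s)$, and the intersection condition yields $vv\in\gamma(p,s)$. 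To rule out $v\in\beta(p)$ I argue by contradiction: if $v$ were already in $\beta(p)$, then $G_{vv}$ would be an entire component of $(G^\circ)^{\beta(p)}$, forcing $\gamma(p,s)=\{vv\}$; the robber-caught clause would then turn $s$ into a leaf, contradicting the existence of $c$. Hence $v$ is a newly placed cop at $s$, and since $vv\in\gamma(p,s)$ it is incident to the escape space, so $s$ is a branching node with chosen vertex $v$.

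Conversely, assume $s$ is a branching node with newly placed cop $v$. Then $s\neq r$, $v\in\beta(s)\setminus\beta(p)$, and $v$ is incident to the escape space $\gamma(p,s)$. Because $v\notin\beta(p)$, the fact above places the loop $vv$ in the same component of $(G^\circ)^{\beta(p)}$ as any edge of $\gamma(p,s)$ incident to $v$; such an edge exists since $v$ is non-isolated in $G$ and meets the escape space. Consequently $vv\in\gamma(p,s)$. At $s$, since $v\in\beta(s)$, the loop $vv$ forms its own component $G_{vv}$ of $(G^\circ)^{\beta(s)}$; this component meets $\Psi^{-1}(\gamma(p,s))$, so the strategy-tree construction produces a child $c$ of $s$ with $\gamma(s,c)=\Psi(E(G_{vv}))=\{vv\}$.

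I expect the only delicate point to be the $v\in\beta(p)$ sub-case of the forward direction: this is the one place where the argument does not reduce to a direct read-off of the definition of $(G^\circ)^X$, since one must invoke the robber-caught clause of the strategy-tree construction (which truncates the tree whenever the incoming cone is a single edge already covered by the cops) in order to obtain a contradiction with $s$ having the child $c$.
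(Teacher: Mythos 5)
Your proposal is correct and matches the intended reading of the construction; the paper states this as an observation without giving a proof, and your argument is the natural direct verification of both directions from Definition~\ref{def:strat-tree} together with the definition of $G^X$. You correctly isolate the pivotal fact (the loop $vv$ is its own component of $(G^\circ)^X$ exactly when $v\in X$), and you rightly flag the one non-routine step, namely invoking the robber-caught clause to exclude $v\in\beta(p)$ in the forward direction.
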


\section{Making a strategy tree exact}\label{sec:make-exact}

Our goal is to prove the following theorem.

\begin{theorem}
	\label{thm:exactness}
	Let $G=(V(G),E(G))$ be a graph, let $k,q\geq 1$ and let $(T,r,\beta,\gamma)$ be a strategy tree for some cop strategy $\sigma\colon V(G^\circ)^k\times E(G^\circ)\rightarrow V(G^\circ)^k$.
	If $\sigma$ is a winning strategy in $\CRkq[G^\circ]$, then there is a tree decomposition of $G$ with width $\leq k$ and depth $\leq q$.
\end{theorem}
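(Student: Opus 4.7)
By \cref{lem:tw-ptw}, it suffices to produce an \emph{exact} \preTreeDec\ of $G^\circ$ with width at most $k-1$ and depth at most $q$; such a structure converts into the desired tree decomposition of $G$. The strategy tree $(T,r,\beta,\gamma)$ of $\sigma$ is already a \preTreeDec\ of $G^\circ$: its width is at most $k-1$ since $|\beta(t)|\leq k$ by construction (capture bags consist of the endpoints of a single edge of $G^\circ$, hence at most $2$ vertices, and cop bags have size at most $k$), and its depth is at most $q$ since $\sigma$ wins $\CRkq[G^\circ]$ within $q$ placements and the quantity $\sum_{s\in P_t\setminus\{r\}}|\beta(s)\setminus\beta(p_s)|$ counts exactly the new cop placements along the corresponding play. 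By \cref{obs:monotone-exact}, the edges of $T$ that fail to be exact are precisely those corresponding to non-monotone moves of $\sigma$. The plan is to modify $(T,r,\beta,\gamma)$ into an exact \preTreeDec\ of $G^\circ$ while preserving both the width and depth bounds, via a breadth-first cleaning-up procedure.

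I would traverse $T$ in breadth-first order from $r$ and repair each non-exact edge in turn. When the current edge is $st$ (with $s$ the parent of $t$), the set of missing edges $R_{st} := E(G^\circ)\setminus(\gamma(s,t)\cup\gamma(t,s))$ lies simultaneously in cones attached to siblings of $t$ and in cones attached to children of $t$. Each $e\in R_{st}$ must be reassigned either to $\gamma(s,t)$ (``pushed up and through'', so that $e$ is routed via $t$'s subtree instead of a sibling's) or to $\gamma(t,s)$ (``pushed down'', so $e$ is routed into a single child of $t$). For each edge, the choice is guided by tracing $e$ back to the branching node of the original strategy tree at which it was first separated from the robber's escape region, using \cref{obs:self-loops-gamma,obs:self-loops} to correlate self-loops with cop placements. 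After each reassignment, the affected partitions $\pi_s,\pi_t$ and the cones below are updated, and the bags are reset to $\delta(\pi_s),\delta(\pi_t)$. The intermediate structure may temporarily cease to represent any valid cop strategy, but only the final exactness and the bounds matter.

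Width is preserved essentially automatically by the submodularity of $\wid$ established earlier: every local extension $\pi\mapsto\pi_{X\ext\complementOf{Y}}$ at one node is paired with a simultaneous extension $\pi'\mapsto\pi'_{Y\ext\complementOf{X}}$ at a neighbouring node, and submodularity yields
\[
\wid(\pi_{X\ext\complementOf{Y}})+\wid(\pi'_{Y\ext\complementOf{X}})\leq\wid(\pi)+\wid(\pi'),
\]
so no bag ever exceeds size $k$. The depth bound is the main obstacle: ``push-ups'' propagate upward and may enlarge bags of already-processed ancestors, which threatens to increase the count of newly introduced vertices along a root-to-leaf path. The crux is a \emph{vertex exchange argument} showing that whenever a push-up adds a vertex $v$ to an ancestor's bag, the same modification simultaneously allows another vertex (the one originally placed there as a cop for the same separation) to be dropped from that bag, keeping the depth along each root-to-leaf path bounded by $q$. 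Once this invariant is maintained throughout the BFS, width and exactness follow routinely, and \cref{lem:tw-ptw} converts the final exact \preTreeDec\ into the desired tree decomposition of $G$.
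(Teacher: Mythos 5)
Your overall plan matches the paper's strategy (view the strategy tree as a \preTreeDec, repair non-exact edges in BFS order, use submodularity for width and a vertex-exchange argument for depth), but the proposal has a genuine gap at its core: the rule for reassigning the missing edges and the resulting width argument. You propose to decide, for each missing edge, whether it is ``pushed up'' or ``pushed down'' by tracing it back to a branching node of the original strategy, and you then claim that width is preserved ``essentially automatically by submodularity''. This does not work as stated. Submodularity only bounds the \emph{sum} $\wid(\pi_{X\ext\complementOf{Y}})+\wid(\pi'_{Y\ext\complementOf{X}})\leq\wid(\pi)+\wid(\pi')$; it does not prevent one of the two boundaries from growing while the other shrinks, so with an arbitrary (or strategy-guided) reassignment some bag may well exceed the bound. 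The paper avoids this by making an \emph{extremal} choice when a node $s_i$ is processed: the sets $F^i_1,\dots,F^i_{a_i}$ are chosen jointly (not edge by edge) so that the resulting partition at $s_i$ has minimum boundary, with ties broken by minimising $|\bigcup_j F^i_j|$. Width preservation is then proved by contradiction: if any bag in the already-processed subtree grew, the change could be pushed back along the (now exact) path to $s_i$, and repeated applications of submodularity would produce a strictly smaller boundary at $s_i$, contradicting minimality. Without such a minimising choice there is no handle for this argument, and nothing in your proposal rules out bag growth.

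The same gap propagates into your depth argument. Your vertex exchange asserts that when a push-up adds a vertex to an ancestor's bag, ``the cop originally placed there for the same separation'' can be dropped; but the paper's exchange is purely a counting argument at the greatest common ancestor $t^*$ of the current node and $s_i$: every newly appearing vertex along the path must appear in $\beta_i(t^*)$, every vertex leaving $\beta_{i-1}(t^*)$ leaves all relevant bags, and the inequality $|U|\leq|W|$ needed to conclude comes from $|\beta_i(t^*)|\leq|\beta_{i-1}(t^*)|$, i.e.\ from the width lemma — hence again from the minimising choice. Moreover, the tie-break minimising $|F^i|$ is itself used in the depth proof (to ensure that when the bag at $s_i$ is unchanged, no other bag in the processed subtree changes). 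So both bounds hinge on the extremal selection of the $F^i_j$, which your proposal replaces by an unanalysed heuristic; as written, neither the width bound nor the depth bound is established. The identification of the new vertices at the children $t^i_j$ with the cop newly placed in the original strategy (via the self-loop observations) is correct in spirit and corresponds to \cref{cl:depth-leafs-new}, but it only handles the last edge of the path, not the ancestors.
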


To prove this we construct an exact \preTreeDec\ of $G^\circ$ from the strategy tree, starting at the root $r$ and traversing the tree nodes in a breadth-first-search.
We then use \cref{lem:tw-ptw} to get the desired tree decomposition.
When we \consider\ a node we change the \preTreeDec\ so that all incident edges are exact afterwards.
Note that by the choice of the traversal we only need to consider outgoing edges.

\subparagraph*{The construction.}
Let $(T,r,\beta,\gamma)$ be the \preTreeDec\ of $G^\circ$ from a winning strategy as in \cref{thm:exactness}.
Let $s_1,\ldots,s_{n_T}$ be an order of the nodes of $T$ in bfs where $s_1=r$. Let $\beta_0\coloneqq\beta$ and $\gamma_0\coloneqq \gamma$.
We construct a sequence $(T,r,\beta_0,\gamma_0),\ldots,(T,r,\beta_{n_T},\gamma_{n_T})$ of \preTreeDec s, such that $(T,r,\beta_{n_T},\gamma_{n_T})$ is exact.
We say $s_i$ is \emph{\considered\ in step $i$}.
Let \[T_i\coloneqq T[\{s_1,\ldots,s_i\}\cup N_T(\{s_1,\ldots,s_i\})].\]
See \cref{fig:teeEye} for an illustration of $T_i$. (It will become clear that this is the subtree of all nodes where the \preTreeDec\ is modified in or before step $i$. We also point out that edges from $T_i$ to $T\setminus T_i$ may become non-exact during our modification process.)

If $s_i$ is a leaf, there are no outgoing edges that are not exact, and we set $\beta_i\coloneqq\beta_{i-1}$ and $\gamma_i\coloneqq \gamma_{i-1}$.
Otherwise let $t^i_1,\ldots,t^i_{a_i}\in N_T(s_i)$ be all children of $s_i$.

\begin{figure}
    \centering
    \begin{tikzpicture}
        \node at (-1,0) {$T_i =$};
    
        \coordinate (A) at (1.5,1);
        \coordinate (B) at (0,-.25);
        \coordinate (C) at (3,-.25);
        
        \draw (A) -- (B) -- (C) -- (A);
        
        \coordinate (si) at (2,0);
        \node (Ti1) at (1.5,0.5) {$T_{i-1}$};
        \coordinate (t1) at (1,-.75);
        \coordinate (ti) at (3,-.75);
        \coordinate (dots1) at (2,-.75);
        \coordinate (dots2) at (2,-1);

        \fill [black] (si) circle (2pt) node[above] {$s_i$};
        \fill [black] (t1) circle (2pt) node[below] {$t_{1}^{i}$};
        \fill [black] (ti) circle (2pt) node[below] {$t_{a_i}^{i}$};
        \node at (dots1) {$\dots$};
        \node at (dots2) {$\dots$};

        \draw (si)--(t1);
        \draw (si)--(ti);
    \end{tikzpicture}
	\caption{The subtree $T_i$ appearing in the construction.}  
\label{fig:teeEye}
\end{figure}
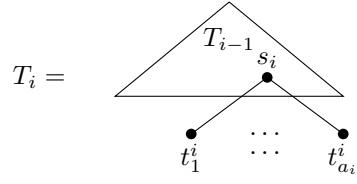

\begin{itemize}
	\item We pick pairwise disjoint $F^i_1,\ldots,F^i_{a_i}\subseteq E(G^\circ)$, with \[F^i_j\subseteq \complementOf{\gamma_{i-1}(t^i_j,s_i)}\cap\complementOf{ \gamma_{i-1}(s_i,t^i_j)},\] such that the partition $\pi^*$ that results from taking the $F^i_j$-extensions in $\gamma_{i-1}(s_i,t^i_j)$ (in arbitrary order) has the minimum size boundary.
	If there are multiple optimal choices for $F^i_1,\ldots,F^i_{a_i}$ we select the one that minimizes the size of $\bigcup_{j\in[a_i]} F^i_j$, if there are still several options we break ties arbitrarily.
	\item Let $F^i\coloneqq\bigcup_{j\in[a_i]} F^i_j$ and $F^{*i}_j\coloneqq \left(\left(\complementOf{\gamma_{i-1}(t^i_j,s_i)}\cap \complementOf{\gamma_{i-1}(s_i,t^i_j)}\right)\cup F^i\right) \setminus F^i_j$.
\end{itemize}

For every $p\in V(T_i)$ with child $c$ we set
\begin{equation*}
\gamma_i(p,c)\coloneqq\begin{cases}
	\left(\gamma_{i-1}(s_i,t^i_j)\setminus F^i\right)\cup F^i_j & \text{ if } (p,c)=(s_i,t^i_j), \text{ for some } j\in[a_i],\\
	\gamma_{i-1}(p,c)\setminus F^{*i}_j & \text{ if } t^i_j= p, \text{ for some } j\in[a_i],\\
	\gamma_{i-1}(p,c)\cup F^i & \text{ if } c\preceq s_i,\\
	\gamma_{i-1}(p,c)\setminus F^i & \text{ otherwise,}
\end{cases}
\end{equation*}
and
\begin{equation*}
\gamma_i(c,p)\coloneqq\begin{cases}
	\gamma_{i-1}(c,p) \cup F^{*i}_j & \text{ if } (p,c)=(s_i,t^i_j), \text{ for some } j\in[a_i],\\
	\gamma_{i-1}(c,p) & \text{ if } t^i_j= p, \text{ for some } j\in[a_i],\\
	\gamma_{i-1}(c,p) \setminus F^i & \text{ if } c\preceq s_i,\\
	\gamma_{i-1}(c,p) \cup F^i & \text{ otherwise,}
\end{cases}
\end{equation*}
and all other $uv\in\overrightarrow{E(T)}$ we set $\gamma_i(u,v)\coloneqq\gamma_{i-1}(u,v)$.
Furthermore we set
\begin{equation*}
\beta_i(t)\coloneqq\begin{cases}
	\delta(\pi^i_t) & \text{ if } t\in V(T_i),\\
	\beta_{i-1}(t) & \text{ otherwise.}
\end{cases}
\end{equation*}

Intuitively in the construction above we push the change at $s_i$ through $T_{i-1}$, that is for all edges in $T_{i-1}$ we add $F^i$ to the directed edge that points away from $s_i$ and remove $F^i$ from the edges in the other direction.
We obtain the following observation.

\begin{observation}
	\label{obs:only-remove}
	Let $i,j\in[n_T]$ such that $s_i$ is the parent of $s_j$.
	Then $\gamma_{\alpha}(s_i,s_j) \subseteq \gamma(s_i,s_j)$, for all $\alpha<i$.
\end{observation}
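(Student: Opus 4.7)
The plan is to induct on $\alpha$, proving the slightly stronger statement that $\gamma_\alpha(s_i,s_j)\subseteq \gamma_{\alpha-1}(s_i,s_j)$ for every $1\leq \alpha<i$; chaining these inclusions together with $\gamma_0=\gamma$ then yields the observation. The base case $\alpha=0$ is immediate by definition.

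For the inductive step I would inspect the piecewise definition of $\gamma_\alpha(s_i,s_j)$ to determine which branch is triggered by the directed edge $(s_i,s_j)$. A modification happens only if $s_i\in V(T_\alpha)$. Since $\alpha<i$, $s_i\notin\{s_1,\dots,s_\alpha\}$, so $s_i\in N_T(s_\beta)$ for some $\beta\leq\alpha$. Because BFS visits ancestors strictly before proper descendants, $s_i$ cannot be the parent of $s_\beta$ (otherwise $i<\beta\leq\alpha$, contradicting $\alpha<i$), so $s_i$ is a child of $s_\beta$. I would then split into two sub-cases.

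If $\beta=\alpha$, then $s_i=t^\alpha_k$ for some $k$ and the second branch of the definition applies, yielding $\gamma_\alpha(s_i,s_j)=\gamma_{\alpha-1}(s_i,s_j)\setminus F^{*\alpha}_k$, which is a subset. If $\beta<\alpha$, then $s_i$ is neither $s_\alpha$ nor a child of $s_\alpha$, so the first two branches of the definition do not apply. The dangerous branch is the third one, which would add $F^\alpha$, and is triggered when $s_j\preceq s_\alpha$. I would rule this out as follows: $s_j\preceq s_\alpha$ means $s_j$ is an ancestor of $s_\alpha$ in $T$, so by the same BFS argument $j\leq\alpha$; but $s_i$ is the parent of $s_j$, so $i<j$, and together with the hypothesis $\alpha<i$ we obtain the contradiction $\alpha<i<j\leq\alpha$. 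Therefore the ``otherwise'' branch applies and $\gamma_\alpha(s_i,s_j)=\gamma_{\alpha-1}(s_i,s_j)\setminus F^\alpha$, again a subset.

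The main obstacle is precisely the exclusion of the $F^\alpha$-adding branch, since it is the only branch of the piecewise definition that could enlarge the cone. Ruling it out boils down to noticing that being on the root-to-$s_\alpha$ path forces a BFS index at most $\alpha$, so the strict BFS chain $\alpha<i<j$ makes this case vacuous. All remaining parts of the argument are routine case-checking against the piecewise definition.
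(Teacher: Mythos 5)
Your proof is correct. The paper states this as an observation with no accompanying argument, so there is no ``official'' proof to compare against; the justification you give is the natural one and is complete. The key point, as you identify, is that for the directed edge $(s_i,s_j)$ with $s_i$ the parent of $s_j$, the only branch of the piecewise definition of $\gamma_\alpha$ that could \emph{add} edges is the third one (the $c\preceq s_\alpha$ branch), and this is excluded by the BFS ordering: $s_j\preceq s_\alpha$ would force $j\leq\alpha$, which is incompatible with $\alpha<i<j$. One additional remark worth recording explicitly: in the sub-case $\beta<\alpha$ you need not only that $s_i\neq s_\alpha$ and that $s_i$ is not a child of $s_\alpha$ (which you argue via uniqueness of parents), but also implicitly that $s_j$ itself is not some $t^\alpha_k$ — this again follows from uniqueness of the parent of $s_j$, since $s_i\neq s_\alpha$. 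With that minor point noted, the case analysis is exhaustive and the chain of inclusions gives exactly the stated observation.
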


\subparagraph*{The proof.}

We prove \cref{thm:exactness} in three steps.
First we prove that the construction indeed yields an exact pre-tree decomposition. 
Next we show that the width can be bounded as desired and lastly we prove that the construction yields the desired depth.

\begin{lemma}
	\label{lem:exact}
	For all $i\in[n_T]$, $(T,r,\beta_i,\gamma_i)$ is a pre-tree decomposition.
	Furthermore all edges in $E(T_i)$ are exact.
\end{lemma}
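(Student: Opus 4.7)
The plan is to prove both claims — the pre-tree-decomposition axioms (PT1)–(PT4) and the exactness of the edges of $T_i$ — simultaneously by induction on $i$. The base case $i=0$ is immediate: $(T,r,\beta,\gamma) = (T,r,\beta_0,\gamma_0)$ is a pre-tree decomposition by hypothesis (strategy tree of a winning strategy), and $E(T_0)$ is empty, so the exactness claim is vacuous. When $s_i$ is a leaf nothing changes in step $i$, so the only substance is the case in which $s_i$ has children $t^i_1,\ldots,t^i_{a_i}$.

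For the inductive step I would do a case analysis on the position of a node $t$ relative to $s_i$. Nodes outside $T_i$ are untouched. For a strict ancestor $t$ of $s_i$, the cone on the downward path to $s_i$ gains $F^i$ while every other cone incident to $t$ loses $F^i$; disjointness and $E(G)$-coverage of $\pi^i_t$ then follow routinely from the inductive partition property of $\pi^{i-1}_t$. For $t = s_i$ the partition $\pi^i_{s_i}$ is by construction the iterated $F^i_j$-extension of $\pi^{i-1}_{s_i}$, so partitionhood is built in, and the bag condition $\beta_i(t) \supseteq \delta(\pi^i_t)$ holds with equality by the defining assignment $\beta_i(t) := \delta(\pi^i_t)$. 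For $t = t^i_j$ a child of $s_i$, the cone to $s_i$ absorbs $F^{*i}_j$ and each cone to a child of $t^i_j$ sheds $F^{*i}_j$; one simply inspects that this shuffle preserves disjointness and $E(G)$-coverage. (PT1) is preserved because the $F^i$-redistribution is intra-component (reflecting that the strategy operates componentwise below $r$), and when $r = s_i$ the boundary-minimizing choice of the construction forces $F^i_j = \emptyset$ for every $j$, leaving the root-cones unchanged. (PT2) is handled by the observation that at any leaf child $\ell = t^i_j$ of $s_i$ the constraint $F^i_j \subseteq \overline{\gamma_{i-1}(\ell,s_i)} \cap \overline{\gamma_{i-1}(s_i,\ell)}$ reduces to $F^i_j \subseteq \emptyset$, so $|\gamma_i(s_i,\ell)| \leq |\gamma_{i-1}(s_i,\ell)| \leq 1$. (PT4) is immediate in every case except the new edge $(s_i,t^i_j)$, where the disjointness $\gamma_i(s_i,t^i_j) \cap \gamma_i(t^i_j,s_i) = \emptyset$ uses the built-in $F^i_j \cap F^{*i}_j = \emptyset$, the constraint $F^i_j \subseteq \overline{\gamma_{i-1}(s_i,t^i_j)}$, and the inductive (PT4).

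Exactness of the new edges $(s_i,t^i_j)$ follows from the key identity
\[
F^i_j \cup F^{*i}_j = \bigl(\overline{\gamma_{i-1}(s_i,t^i_j)} \cap \overline{\gamma_{i-1}(t^i_j,s_i)}\bigr) \cup F^i,
\]
which holds because $F^i_j \subseteq F^i$; substituting this into $\gamma_i(s_i,t^i_j) \cup \gamma_i(t^i_j,s_i)$ yields $\gamma_{i-1}(s_i,t^i_j) \cup \gamma_{i-1}(t^i_j,s_i) \cup \overline{\gamma_{i-1}(s_i,t^i_j)\cup\gamma_{i-1}(t^i_j,s_i)} = E(G)$. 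Exactness of the already-exact edges of $T_{i-1}$ is preserved because the addition and removal of $F^i$ (respectively $F^{*i}_j$ below a child of $s_i$) on the two directions of an edge are symmetric: precisely the edges added to one directed cone are removed from the other, leaving the union invariant. I expect the main difficulty to be the fine bookkeeping at the children $t^i_j$, where $F^{*i}_j$ combines three semantically different pieces — the gap $\overline{\gamma_{i-1}(s_i,t^i_j)} \cap \overline{\gamma_{i-1}(t^i_j,s_i)}$, the push-back $F^i \setminus F^i_j$ coming from the sibling cones, and the subtraction of $F^i_j$ — and one must verify that no combination of these breaks the partition or the disjointness at $t^i_j$, even though exactness of the edges strictly below $t^i_j$ is not claimed here and may genuinely fail at this stage, to be restored only when those children are processed later in the BFS.
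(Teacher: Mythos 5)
Your overall plan---induction along the BFS order, reading the step-$i$ changes at each node as an $F^i$- (resp.\ $F^{*i}_j$-) extension of the local partition, symmetric transfer of $F^i$ across the two directions of the edges of $T_{i-1}$, and the identity $F^i_j\cup F^{*i}_j=\bigl(\complementOf{\gamma_{i-1}(s_i,t^i_j)}\cap\complementOf{\gamma_{i-1}(t^i_j,s_i)}\bigr)\cup F^i$ for the new edges $s_it^i_j$---is the same route as the paper's proof, and your exactness computation at the edges $s_it^i_j$ is correct. However, your verification of \ref{ax:preTreeLeaf} has a genuine error: you claim that at a leaf child $\ell=t^i_j$ of $s_i$ the constraint set $\complementOf{\gamma_{i-1}(\ell,s_i)}\cap\complementOf{\gamma_{i-1}(s_i,\ell)}$ is empty, i.e.\ that the edge $s_i\ell$ is still exact in $\gamma_{i-1}$, forcing $F^i_j=\emptyset$. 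That is precisely what may fail: edges from $T_{i-1}$ to $T\setminus T_{i-1}$ can lose exactness in earlier steps. Concretely, when the parent of $s_i$ was considered, say at step $\alpha$ with $s_i=t^\alpha_{j_0}$, the construction sets $\gamma_\alpha(s_i,\ell)=\gamma_{\alpha-1}(s_i,\ell)\setminus F^{*\alpha}_{j_0}$ while leaving $\gamma_\alpha(\ell,s_i)$ unchanged; if the capture edge $e$ with $\gamma(s_i,\ell)=\{e\}$ lies in $F^{*\alpha}_{j_0}$ (possible whenever the original strategy moves non-monotonically at $s_i$, so that $e$ lies in the gap at the edge between $s_i$ and its parent), then at step $i$ the constraint set equals $\{e\}\neq\emptyset$, and choosing $F^i_j=\{e\}$ is allowed and indeed intended: refilling such cones is part of how exactness is restored. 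The conclusion $|\gamma_i(s_i,\ell)|\leq 1$ is still true, but the argument must run as in the paper: by \cref{obs:only-remove} we have $\gamma_{i-1}(s_i,\ell)\subseteq\gamma(s_i,\ell)$, and since cones emanating from $\ell$ are never shrunk before step $i$, $\gamma_{i-1}(\ell,s_i)\supseteq\gamma(\ell,s_i)=E(G^\circ)\setminus\gamma(s_i,\ell)$; hence $F^i_j\subseteq\complementOf{\gamma_{i-1}(\ell,s_i)}\subseteq\gamma(s_i,\ell)$, so $\gamma_i(s_i,\ell)=(\gamma_{i-1}(s_i,\ell)\setminus F^i)\cup F^i_j\subseteq\gamma(s_i,\ell)$ has size at most one, and in later steps this cone only loses edges.

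A secondary incompleteness: your case distinction treats only strict ancestors of $s_i$, the node $s_i$ itself, and its children, but $V(T_{i-1})$ also contains nodes off the root-to-$s_i$ path (siblings of ancestors and earlier frontier nodes), all of whose incident cones are rewritten in step $i$ and whose $E(T_{i-1})$-edges must remain exact; the same argument (the cone pointing towards $s_i$ gains $F^i$, every other cone at that node loses $F^i$, so additions and removals match across each such edge) covers them, and \ref{ax:preTreeLeaf} must likewise be checked for leaves hanging off these nodes, which is easy since their incoming cones only shrink at step $i$. Finally, at step $1$ it is not the boundary minimisation but the exactness of the root edges in $\gamma_0$ that forces $F^1_j=\emptyset$---the same reasoning you attempted at leaf children, valid there only because nothing has been modified yet.
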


\begin{proof}
	\ref{ax:preTreeRoot} holds as all edges leaving the root are already exact in $\gamma$, thus we change nothing in step 1 where the root is considered and every $F^i$, with $i>1$, only contains edges from a single component of $G$ by construction.
	
	We observe that the changes from $\gamma_{i-1}$ to $\gamma_{i}$ at some node $t\in V(T_i)\setminus \{s_i\}$ corresponds to an $F^i$- or $F^{*i}_j$-extension of $\pi^{i-1}_t$ at the set that corresponds to the edge, that points towards $s_i$.
	Furthermore $\pi^i_s$ is a partition of the edges by construction and for all $t\in V(T_i)$ we set $\beta_i(t)=\delta(\pi^i_t)$.
	As $\gamma_i$ and $\beta_i$ are equal to $\gamma$ and $\beta$ at all vertices that are not part of $V(T_i)$, this shows by induction that \ref{ax:preTreePart} still holds.
	
	Next we observe that at every edge that is not incident to some $t^i_j$ we add to one direction exactly what we remove from the other direction.
	Furthermore by construction the edges $s_i t^i_j$ are exact after the construction.
	Lastly, for all children $c$ of $t^i_j$, we only remove edges from $\gamma_i(t^i_j,c)$.
	Thus again by induction we get that \ref{ax:preTreeEdge} holds and that all edges of $T_i$ are exact.
	
	It remains to show that \ref{ax:preTreeLeaf} holds.
	Let $i\in[n_t]$ and $j\in[a_i]$ such that $t^i_j\in L(T)$.
	From \cref{obs:only-remove} we know that $\gamma_{i-1}(s_i,t^i_j)\subseteq \gamma(s_i,t^i_j)$ and thus $|\gamma_{i-1}(s_i,t^i_j)|\leq 1$.
	Furthermore we know that $\gamma_{i-1}(t^i_j,s_i)=\gamma(t^i_j,s_i)= E(G^\circ)\setminus \gamma(s_i,t^i_j)$.
	Therefore we get that $F^i_j\subseteq \gamma(s_i,t^i_j)$ and thus $\gamma_{i}(s_i,t^i_j)\subseteq \gamma(s_i,t^i_j)$.
	By construction, in step $i'\in[n_T]$, such that $s_{i'}=t^i_j$, we do nothing.
	And in all other steps $\alpha > i$, we have that $t^i_j\not\preceq s_\alpha$ and thus we only remove edges from $\gamma_{\alpha}(s_i,t^i_j)$.
	This shows that for all $\alpha\in[n_T]$ we have $|\gamma_{\alpha}(s_i,t^i_j)|\leq 1$.
\end{proof}

Hence, for $i=n_T$, we get that $(T,r,\beta_{n_T},\gamma_{n_T})$ is an exact \preTreeDec. 
Note that it is possible that $\gamma_{n_T}(s,t)$ is empty for an edge $st\in \overrightarrow{E(T)}$.
By Lemma~\ref{lem:tw-ptw} we obtain a tree decomposition, from this \preTreeDec.
We show below that the width and depth are as stated in the theorem.

Our construction does not change the width of the decomposition.
To prove this we observe that in step $i$ the bound in $s_i$ is minimal.
We then push the change through the subtree $T_i$ and find that if a change would increase the width, we could push this change back to the node $s_i$ and find an even smaller bound there, which contradicts the minimality of our choice.

\begin{lemma}
	\label{lem:width}
	$\wid(T,r,\beta_i,\gamma_i)\leq \wid(T,r,\beta,\gamma)$, for all $i\in[n_T]$.
\end{lemma}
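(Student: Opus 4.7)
My plan is to show by induction on $i$ the stronger claim that $|\delta(\pi^i_t)| \leq |\delta(\pi^{i-1}_t)|$ for every $t \in V(T_i)$; combined with the fact that $\beta_i(t) = \beta_{i-1}(t)$ for $t \notin V(T_i)$ and the induction hypothesis, this yields the lemma via axiom~\ref{ax:preTreePart} (using that for $t \in V(T_{i-1})$ we have $\beta_{i-1}(t)=\delta(\pi^{i-1}_t)$, and for $t \in V(T_i)\setminus V(T_{i-1})$ we have $\delta(\pi^{i-1}_t) = \delta(\pi_t) \subseteq \beta(t) = \beta_{i-1}(t)$). The base case $i=1$ is immediate, since $s_1=r$ and \ref{ax:preTreeRoot} forces every edge at the root to already be exact, making the ``middle'' sets $\overline{\gamma(r,c)} \cap \overline{\gamma(c,r)}$ empty, so $F^1_j=\emptyset$ for all $j$ and the \preTreeDec{} is unchanged.

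For the inductive step, the case $t = s_i$ follows directly from the design of the construction: the trivial choice $F^i_j = \emptyset$ for all $j$ is admissible and produces exactly $\pi^{i-1}_{s_i}$, so minimality of $\pi^*$ in $|\delta(\cdot)|$ forces $|\delta(\pi^i_{s_i})| \leq |\delta(\pi^{i-1}_{s_i})|$. The main case is $t \in V(T_i) \setminus \{s_i\}$. A case analysis of the construction shows that $\pi^i_t$ arises from $\pi^{i-1}_t$ by a single $F$-extension: by $F^i$ at the block of $\pi^{i-1}_t$ pointing towards $s_i$ when $t$ is an ancestor of $s_i$ in $T$ or lies in a side branch of $T_{i-1}$, and by $F^{*i}_j$ at the parent block when $t = t^i_j$ is a child of $s_i$. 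I would then argue by contradiction: assuming $|\delta(\pi^i_t)| > |\delta(\pi^{i-1}_t)|$, I would apply submodularity of $\wid$ to the pair $(\pi^{i-1}_{s_i}, \pi^{i-1}_t)$, with distinguished blocks the cones at the first and last edges of the unique $T$-path from $s_i$ to $t$. The submodular exchange then translates into an admissible redistribution $F'^i_1, \ldots, F'^i_{a_i}$ of the chosen sets whose induced partition at $s_i$ has strictly smaller boundary than $\pi^*$, contradicting the primary minimality in the construction.

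The hardest part will be executing this exchange faithfully. The $F$-extensions at $t$ and at $s_i$ involve different sets ($F^{*i}_j$ versus $F^i_j$), so the submodular swap between the cones must be carefully translated back into modifications of the $F^i_j$'s. Moreover, when $t$ is far from $s_i$, the argument must be propagated along the $T$-path via Observation~\ref{obs:exact-path}, leveraging the fact (Lemma~\ref{lem:exact}) that all edges of $T_{i-1}$ are exact before step $i$, so that the cones along the path form a descending chain; this is precisely what makes the ``pull back'' to $s_i$ well-defined. Finally, in the borderline scenario where the submodular inequality is tight rather than strict, I would need to invoke the secondary tiebreaker in the construction (minimizing $|\bigcup_j F^i_j|$) to rule out the putative alternative and reach the desired contradiction.
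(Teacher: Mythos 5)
Your plan matches the paper's own proof essentially step for step: nodes outside $T_i$ are untouched, minimality of the choice of $F^i_1,\dots,F^i_{a_i}$ handles $s_i$, a single submodularity application against $\pi^i_{s_i}$ handles the children $t^i_j$, and for the remaining nodes of $T_{i-1}$ the paper does exactly the propagation you sketch, iteratively pushing the width improvement along the (exact, cf.~\cref{lem:exact} and \cref{obs:exact-path}) path back to $s_i$ to contradict minimality. One small remark: the secondary tiebreaker (minimising $|\bigcup_j F^i_j|$) is not needed here, since assuming a strict width increase at $t$ already yields a strict decrease at each step of the push-back; that tiebreaker is only used later in the depth argument (\cref{lem:depth}).
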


\begin{proof}
	We prove the statement for all $0\leq i\leq n_T$ by induction.
	As $(T,r,\beta_0,\gamma_0)=(T,r,\beta,\gamma)$, the statement clearly holds for $i=0$.
	Next we show that $\wid(T,r,\beta_i,\gamma_i)\leq \wid(T,r,\beta_{i-1},\gamma_{i-1})$, for all $i\in[n_T]$.
	Obviously $|\beta_i(t)|=\beta_{i-1}(t)$, for all $t\notin T_i$.
	Furthermore by construction $|\beta_i(s_i)|\leq |\beta_{i-1}(s_i)|$.
	Let $j\in[a_i]$, let $X\coloneqq \gamma_i(s_i,t^i_j)$ and let $Y\coloneqq\gamma_{i-1}(t^i_j,s_i)$.
	We observe that \[\pi^{i}_{t^i_j}=\pi^{i-1}_{t^i_j,Y \ext \complementOf{X}}.\]
	Thus it holds that \[|\beta_i(t^i_j)| = |\wid(\pi^{i-1}_{t^i_j, Y \ext \complementOf{X}})| \leq |\wid(\pi^{i-1}_{t^i_j})| \leq |\beta_{i-1}(t^i_j)|\] as otherwise by submodularity for the partitions $\pi^{i-1}_{t^i_j}$ and $\pi^i_{s_i}$, we get that \[|\wid(\pi^i_{s_i, X \ext \complementOf{Y}})| < |\wid(\pi^i_{s_i})|,\]
	which contradicts the minimality of the bound for $F^i_1,\ldots,F^i_{a_i}$.
	
	Lastly assume there is a node $t$ in $V(T_i)\setminus\{s_i,t^i_1,\ldots,t^i_{a_i}\}$ such that $|\beta_i(t)|>|\beta_{i-1}(t)|$.
	We assume $t$ is of minimal distance to $s_i$ with this property.
	Let $x_0=t,x_1,\ldots,x_{b}=s_i$ be the path from $t$ to $s_i$.
	By minimality of the distance we know that $|\beta_i(x_1)|\leq |\beta_{i-1}(x_1)|$.
	Additionally we know that all edges on the path from $s_i$ to $x_1$ are exact in $\gamma_i$, as well as the edge $x_1t$ in $\gamma_{i-1}$.
	Now let $Y\coloneqq \gamma_{i-1}(t,x_1)$ and, for all $0\leq \alpha < b$, let $X_\alpha \coloneqq \gamma_i(x_{\alpha + 1},x_{\alpha})$ and $Z_\alpha \coloneqq \gamma_i(x_{\alpha},x_{\alpha + 1})$.
	The transition from $i-1$ to $i$ at $t$ corresponds to $\pi^{i-1}_{t,Y \ext F}=\pi^{i-1}_{t,Y \ext \complementOf{X_0}}$.
	Thus if $\wid(\pi^{i-1}_{t,Y \ext F})=|\beta_i(t)|>|\beta_{i-1}(t)|=\wid(\pi^{i-1}_{t})$ we get by submodularity that $\wid(\pi^i_{x_1})>\wid(\pi^i_{x_1,X_0 \ext \complementOf{Y}})$.
	As the edge $x_1t$ was exact at step $i-1$, we know that  \[F'\coloneqq\complementOf{Y}\setminus X_0= F^i\setminus Y \subseteq F^i.\]
	We now push this change back to $s_i$ along the path $x_1,\ldots,x_b$ and we again find a contradiction to the minimality of the bound of $F^i_1,\ldots,F^i_{a_i}$.
	For this, let us assume we have pushed the change to $x_\alpha$, that is we changed $\pi^i_{x_\alpha}$ to $\pi^*_{x_\alpha} = \pi^i_{x_\alpha,X_{\alpha - 1}\ext F'}$ and we know that $\wid(\pi^*_{x_\alpha})<\wid(\pi^{i}_{x_\alpha})$.
	As the edge $x_\alpha x_{\alpha+1}$ is exact in $\gamma_i$, we get that $\pi^*_{x_\alpha,(Z_\alpha \setminus F') \ext \complementOf{X_\alpha}} = \pi^i_{x_\alpha}$.
	Let \[\pi^*_{x_{\alpha+1}}\coloneqq \pi^i_{x_{\alpha+1},X_\alpha \ext \complementOfB{(Z_\alpha \setminus F')}} = \pi^i_{x_{\alpha+1},X_\alpha \ext F'},\] then by submodularity $\wid(\pi^*_{x_{\alpha+1}}) < \wid(\pi^i_{x_{\alpha+1}})$.
	When we have pushed the change to $\alpha=b$, we find the desired contradiction.
\end{proof}

To prove that our construction does not increase the depth we show that in every step $i$ the depth up to the nodes in $T_i$ is bounded by the depth up to these nodes in the original tree.
We prove this by induction on the number of steps.
In step $i$ every change in any bag at some node in $V(T_{i-1})$ is closely related to the change at the \considered\ node $s_i$.
Additionally we find that the vertices in a bag at some child of $s_i$ that is not present in the bag at $s_i$ is exactly the vertex the cop player newly placed in the corresponding move of the game.

\begin{lemma}
	\label{lem:depth}
	For all $i\in[n_T]$ and all $t\in V(T_i)$, it holds that
	\begin{equation*}
	\sum_{s\in P_t\setminus\{r\}} |\beta_i(s) \setminus \beta_i(p_s)|\leq \sum_{s\in P_t\setminus\{r\}} |\beta(s) \setminus \beta(p_s)|.
	\end{equation*}
\end{lemma}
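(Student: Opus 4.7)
The plan is to proceed by induction on $i$; the base case $i=0$ is trivial since $(\beta_0,\gamma_0)=(\beta,\gamma)$. For the inductive step, fix $t\in V(T_i)$. Since BFS processes ancestors before descendants, $P_t\subseteq V(T_i)$, and by~\cref{lem:exact} all edges of $P_t$ are exact in $\gamma_i$. \cref{obs:exact-subtree-depth} therefore rewrites the LHS of the desired inequality as $\bigl|\bigcup_{s\in P_t}\beta_i(s)\bigr|$, and similarly at step $i-1$ whenever $P_t\subseteq V(T_{i-1})$.

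I would actually establish the stronger inclusion
\[
\bigcup_{s\in P_t}\beta_i(s)\ \subseteq\ \bigcup_{s\in P_t}\beta(s),
\]
which directly implies the lemma, since the right-hand union has cardinality at most $\sum_{s\in P_t\setminus\{r\}}|\beta(s)\setminus\beta(p_s)|$. Vertices in $\beta_{i-1}(s)$ for some $s\in P_t$ are covered by the inductive hypothesis applied to an appropriate $t'\in V(T_{i-1})$ with $P_{t'}\subseteq P_t$ (namely $t'=t$ when $t\in V(T_{i-1})$, and $t'=s_i$ when $t=t^i_j$). The task thus reduces to showing that every $v\in\beta_i(s)\setminus\beta_{i-1}(s)$ with $s\in P_t$ lies in $\bigcup_{s'\in P_t}\beta(s')$.

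For the fresh-child case $s=t^i_j$, I would argue that if $v\in\beta_i(t^i_j)\setminus\beta_i(s_i)$, then all of $v$'s edges must lie in $\gamma_i(s_i,t^i_j)$ (otherwise $v$ is already on the $s_i$-boundary) and they are split across at least two child-cones $\gamma_i(t^i_j,c_k)$. By~\cref{obs:only-remove} we have $\gamma_i(t^i_j,c_k)\subseteq\gamma(t^i_j,c_k)$, and the original cones $\gamma(t^i_j,c_k)$ represent distinct components of the edge-component graph $G^{\beta(t^i_j)}$; since only a vertex of $\beta(t^i_j)$ can have edges in two such components, $v\in\beta(t^i_j)\subseteq\bigcup_{s'\in P_t}\beta(s')$. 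Any remaining $v\in\beta_i(t^i_j)\setminus\beta(t^i_j)$ must then already lie in $\beta_i(s_i)$, reducing the problem to the main case below. Combining this with~\cref{obs:self-loops-gamma,obs:self-loops} also yields the sharper bound $|\beta_i(t^i_j)\setminus\beta_i(s_i)|\leq|\beta(t^i_j)\setminus\beta(s_i)|\leq 1$.

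The main obstacle is the case $s\in V(T_{i-1})$: such a $v\in\beta_i(s)\setminus\beta_{i-1}(s)$ must lie in $\delta(F^i)\cup\bigcup_j\delta(F^{*i}_j)$ because $\pi^i_s$ differs from $\pi^{i-1}_s$ only through an $F^i$- or $F^{*i}_j$-extension. The edges of $F^i$ sat, at step $i-1$, in parts of $\pi^{i-1}_{s_i}$ other than $\gamma_{i-1}(s_i,t^i_j)$, and the chain of exact edges on $P_{s_i}$ (\cref{obs:exact-path}) dictates in which cone of each $\pi^{i-1}_{s'}$ along $P_{s_i}$ these edges live. Combining this with the connectedness of $v$'s boundary-occurrence subtree in $T_{i-1}$ (a consequence of~\cref{obs:exact-subtree-depth}) and the minimality of the choice of $F^i_j$, one performs the vertex-exchange argument alluded to in the introduction to locate an $s'\in P_t$ with $v\in\beta_{i-1}(s')$, after which the inductive hypothesis closes the inclusion. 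Setting this exchange up carefully across the different topological positions of $s_i$ relative to $t$---ancestor of $t$, descendant of $t$, or in a sibling subtree---is the most delicate part of the proof.
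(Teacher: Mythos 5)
Your framing (induction on $i$, rewriting the left-hand side as $\bigl|\bigcup_{s\in P_t}\beta_i(s)\bigr|$ via \cref{lem:exact} and \cref{obs:exact-subtree-depth}, and a separate claim for the fresh children $t^i_j$ in the spirit of \cref{cl:depth-leafs-new}) matches the paper, but the core of your plan has a genuine gap. You reduce everything to the pointwise inclusion $\bigcup_{s\in P_t}\beta_i(s)\subseteq\bigcup_{s\in P_t}\beta(s)$, and in the main case you dispose of a vertex $v\in\beta_i(s)\setminus\beta_{i-1}(s)$ by asserting that a "vertex-exchange argument" will "locate an $s'\in P_t$ with $v\in\beta_{i-1}(s')$". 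That assertion is exactly the hard part of the lemma, and it is stronger than what is true or needed: it amounts to claiming that the path-union $\bigcup_{s\in P_t}\beta_i(s)$ never acquires any vertex not already present at step $i-1$. Nothing in your sketch (minimality of $F^i_j$, \cref{obs:exact-path}, connectivity of the occurrence subtree) establishes this, and the machinery of the paper indicates it cannot be expected: pushing $F^i$ through $T_{i-1}$ genuinely can create boundary vertices at nodes of $T_{i-1}$ that occur in no bag along $P_t$, in $\beta_{i-1}$ or in $\beta$. This is also a misreading of what "vertex exchange" means here — it does not find the new vertex in an old bag, it trades new vertices against different removed ones.

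Concretely, the paper's argument is a cardinality exchange, not a membership argument. Every vertex newly entering the union along $P_t$ must, by \cref{cl:new-vertex-new}, appear in $\beta_i(t^*)\setminus\beta_{i-1}(t^*)$ at the greatest common ancestor $t^*$ of $t$ and $s_i$; by \cref{lem:width}, $|\beta_i(t^*)|\le|\beta_{i-1}(t^*)|$, so the set $W=\beta_{i-1}(t^*)\setminus\beta_i(t^*)$ is at least as large as the set $U$ of newcomers; and by \cref{cl:remove-vertex-new} the vertices of $W$ disappear not only from $\beta(t^*)$ but from every bag on $P_t$, whence $\bigl|\bigcup_{s\in P_t}\beta_i(s)\bigr|\le\bigl|\bigcup_{s\in P_t}\beta_{i-1}(s)\bigr|$ even though the set itself may change. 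Your proposal contains no analogue of the removal claim or of the width comparison at $t^*$, and without them the inclusion you aim for is unsubstantiated; the correct statement to carry through the induction is the inequality of cardinalities (against $\sum_{s\in P_t\setminus\{r\}}|\beta(s)\setminus\beta(p_s)|$), not set containment. The fresh-child case is in the right spirit, though to conclude membership in $\beta(t^i_j)\setminus\beta(s_i)$ (and your unneeded bound $|\beta(t^i_j)\setminus\beta(s_i)|\le 1$) you still need the self-loop observations (\cref{obs:self-loops-gamma} and \cref{obs:self-loops}) to rule out $v\in\beta(s_i)$, which your sketch does not actually argue.
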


\begin{proof}
	Let $\ell\in[n_T]$.
	As by construction $\beta_\ell(t)=\delta(\pi^\ell_t)$, for all $t\in V(T_\ell)$, we get from \cref{obs:exact-subtree-depth} and \cref{lem:exact} that $|\bigcup_{s\in P_t} \beta_\ell(s)| = \sum_{s\in P_t\setminus\{r\}} |\beta_\ell(s) \setminus \beta_\ell(p_s)|$.
	Thus it suffices to show that $|\bigcup_{s\in P_t} \beta_i(s)| \leq \sum_{s\in P_t\setminus\{r\}} |\beta(s) \setminus \beta(p_s)|.$
	
	We prove the statement by induction on the steps $i$.
	Recall that $(T,r,\beta_0,\gamma_0)=(T,r,\beta,\gamma)$, thus the statement holds for $i=0$.
	Now assume the statement holds for $i-1$, thus for all $t\in V(T_{i-1})$ it hold that  $|\bigcup_{s\in P_t} \beta_{i-1}(s)| \leq \sum_{s\in P_t\setminus\{r\}} |\beta(s) \setminus \beta(p_s)|.$
	
	We recall that $V(T_i)=V(T_{i-1})\cup \{t^i_1,\ldots,t^i_{a_i}\}$ and that $s_i\in L(T_{i-1})$.
	For the nodes $t\in V(T_{i-1})$ we can directly build upon the induction hypothesis.
	But the nodes $t^i_j$, with $j\in [a_i]$, are added into the subtree.
	Here we need to compare directly to the original bags, as we can no longer use that in step $i-1$ the depth at these nodes is bounded by the depth in the original strategy tree.
	We can prove for these nodes that every vertex newly placed at one of these nodes in step $i$ is also newly placed in the original strategy.
	Then we can show that the difference between depth at these nodes and their parent in step $i$ can be bounded by the difference in the original strategy tree.
	\begin{claim}
		\label{cl:depth-leafs-new}
		 Every $j\in[a_i]$ satisfies $\beta_i(t^i_j)\setminus \beta_i(s_i)\subseteq \beta(t^i_j)\setminus \beta(s_i).$
	\end{claim}
	\begin{claimproof}
		Let $v \in \beta_i(t^i_j) \setminus \beta_i(s_i)$.
		As $v \notin \beta_i(s_i)$ we get that $v\notin\delta(\gamma_i(t^i_j,s_i))$ and thus $E_{G^\circ}(v)\cap \gamma_i(t^i_j,s_i) = \emptyset$.
		By construction we have that $\gamma_i(t^i_j,s_i)\supseteq {i-1}(t^i_j,s_i) = \gamma(t^i_j,s_i)$, and thus $vv \notin \gamma(t^i_j,s_i)$.
		As $v \in \beta_i(t^i_j)=\delta(\pi^i_{t^i_j})$, there are two distinct children $c_1,c_2$ of $t^i_j$ such that $v\in\delta(\gamma_i(t^i_j,c_\ell))$ and thus $E_{G^\circ}(v)\cap \gamma_i(t^i_j,c_\ell) \neq \emptyset$, for $\ell = 1,2$.
		By construction we have $\gamma_i(t^i_j,c_\ell)\subseteq \gamma_{i-1}(t^i_j,c_\ell) = \gamma(t^i_j,c_\ell)$, for $\ell = 1,2$.
		And thus $v\in \delta(\pi_{t^i_j})\subseteq \beta(t^i_j)$.
		By \cref{obs:self-loops-gamma} there thus is a child $c$ of $t^i_j$ such that $\gamma(t^i_j,c)=\{vv\}$ and, by \cref{obs:self-loops}, $v\in \beta(t^i_j)\setminus \beta(s_i)$.
	\end{claimproof}
	
	The following claim tracks vertices that are added to any bag in $V(T_{i-1})$ at step $i$.
	
	\begin{claim}
		\label{cl:new-vertex-new}
		Let $i\in [n_T]$ and let $t\in V(T_{i-1})$.
		If $v\in \beta_i(t)\setminus\beta_{i-1}(t)$, then $v\in\beta_i(t^*)$, for all $t^*$ on the path from $t$ to $s_i$.
	\end{claim}
	
	\begin{claimproof}
		Let $t^*\neq t$.
		Let $t'$ be the next node on the path from $t$ to $s_i$.
		Then $\gamma_i(t,t')=\gamma_{i-1}(t,t')\cup F^i$.
		As $\gamma_i(t,t')$ is the only set incident to $t$ where edges are added in step $i$, we get that $v\in\delta(\gamma_i(t,t'))$.
		And from $v\notin\delta(\gamma_{i-1}(t,t'))$ we get that $v\in\delta(F^i)$.
		Now suppose that $v\notin \beta_i(t^*)$, and thus also $v\notin \delta(\gamma_i(t^*,p))$, where $p$ is the next node on the path from $t^*$ to $t$.
		As $v$ is incident to edges in $F^i$ we get that $E_{G^\circ}(v) \cap \gamma_i(t^*,p) = \emptyset$.
		We know from \cref{lem:exact} that all edges in $T_i$ are exact and thus that $\gamma_i(t',t)\subseteq \gamma_i(t^*,p)$ by \cref{obs:exact-path}.
		This is a contradiction to $v\in\delta(\gamma_i(t,t'))=\delta(\gamma_i(t',t))$ and thus $v\in\beta_i(t^*)$.
	\end{claimproof}
	
	The next claim is used to show that a vertex that disappears from a bag in $V(T_{i-1})$ at step $i$ also disappears from the union of bags that determine the depth at that bag, especially if a vertex disappears from the bag at $s_i$, then it disappears from every bag in $V(T_i)$.  
	
	\begin{claim}
		\label{cl:remove-vertex-new}
		Let $i\in [n_T]$ and let $t\in V(T_{i-1})$. If $v\in \beta_{i-1}(t)\setminus\beta_{i}(t)$, then $v\notin\beta_i(t^*)$, for all $t^*\in V(T_{i-1})$ 
		such that $t$ is contained in the path from $t^*$ to $s_i$.
	\end{claim}
	
	\begin{claimproof}
		We have $E_{G^\circ}(v)\cap F^i\neq \emptyset$.
		
		Let $t=s_i$.
		As $v\notin \beta_i(s_i)$ we get that $v\notin \delta(\gamma_i(s_i,p_{s_i}))$ and thus 
		$E_{G^\circ}(v)\cap \gamma_i(s_i,p_{s_i}) = E_{G^\circ}(v)\cap \gamma_{i-1}(s_i,p_{s_i}) \cap \complementOf{F^i} = \emptyset$.
		Now let $t^*\in V(T_{i-1})$ and $t'$ be the next node on the path from $t^*$ to $s_i$.
		Then by \cref{lem:exact} we get that $\gamma_i(t^*,t')\supseteq\gamma_i(p_{s_i},s_i)\supseteq E_{G^\circ}(v)$ 
		and thus $v\notin\beta_i(t^*)$.
		
		Otherwise let $t\neq s_i$
		Let $t'$ be the next node on the path from $t$ to $s_i$.
		As $v\notin\delta(\gamma_i(t,t'))$ it follows that $E_{G^\circ}(v)\subseteq \gamma_i(t,t')=\gamma_{i-1}(t,t')\cup F^i$, that $v\in\delta(\gamma_{i-1}(t,t'))$, and that $E_{G^\circ}(v) \cap \gamma_{i-1}(t',t)\subseteq E_{G^\circ}(v)\cap F^i$.
		Assume there is some $t^*\in V(T_{i-1})$ such that $v\in\beta_i(t^*)$.
		We observe that due to \cref{lem:exact} and because all edges incident to $v$ are contained in $\gamma_i(t,t')$, we get that $t$ is not contained in the path from $t^*$ to $s_i$.
	\end{claimproof}
	
	We are now ready to prove the lemma.
	Towards this, let $i\geq 1$ and assume the statement holds for $i-1$.
	We consider all vertices that appear at a bag at any node in $T_i$ due to the changes in step $i$.
	Observe that if $\beta_i(s_i)=\beta_{i-1}(s_i)$, then there are no changes to the bags at other nodes than the $t^i_j$ by minimality of $|F^i|$, and if $\beta_i(s_i)\neq\beta_{i-1}(s_i)$,  we have $|\beta_i(s_i)|<|\beta_{i-1}(s_i)|$ again by the minimality of the choice.
	
	Let $t\in V(T_{i-1})$.
	Let $U \coloneqq \left(\bigcup_{s\in P_t} \beta_{i}(s)\right) \setminus \left(\bigcup_{s\in P_t} \beta_{i-1}(s)\right)$.
	Let $t^*$ be the greatest common ancestor of $t$ and $s_i$.
	As $t^*$ is on every path from some node in $P_t$ to $s_i$, from \cref{cl:new-vertex-new} we know that $u\in\beta_i(t^*)\setminus \beta_{i-1}(t^*)$, for all $u\in U$.
	Let $W = \beta_{i-1}(t^*)\setminus \beta_{i}(t^*)$
	As by \cref{lem:width} $|\beta_i(t^*)|\leq |\beta_{i-1}(t^*)|$, we know that $|U| \leq |W|$.
	By \cref{cl:remove-vertex-new} we get that $W\subseteq \left(\bigcup_{s\in P_t} \beta_{i-1}(s)\right) \setminus \left(\bigcup_{s\in P_t} \beta_{i}(s)\right)$.
	By this \emph{vertex exchange} we conclude that $\left|\bigcup_{s\in P_t} \beta_{i}(s)\right| \leq \left|\bigcup_{s\in P_t} \beta_{i-1}(s)\right|$.
	
	Otherwise $t=t^i_j$ for some $j\in [a_i]$.
	By construction we get $\bigcup_{s\in P_{t}} \beta_{i}(s) = \bigcup_{s\in P_{s_i}} \beta_{i}(s) \cup \beta_i(t) \setminus \beta_i(s_i)$.
	We have shown above that $|\bigcup_{s\in P_{s_i}} \beta_{i}(s)| \leq \sum_{s\in P_{s_i}\setminus\{r\}} |\beta(s) \setminus \beta(p_s)|$ 
	and by \cref{cl:depth-leafs-new} we have $\beta_i(t) \setminus \beta_i(s_i) \subseteq \beta(t)\setminus \beta(s_i)$.
	Thus we can bound the union $|\bigcup_{s\in P_{t}} \beta_i(s)| \leq \sum_{s\in P_{t}\setminus\{r\}} |\beta(s) \setminus \beta(p_s)|$.	
\end{proof}

\begin{proof}[Proof of \cref{thm:exactness}]
	Combining \cref{lem:exact,lem:width,lem:depth} we get that there exists an exact \preTreeDec\ of $G^\circ$ of width $\leq k$ and depth $\leq q$, if the cop player wins $\CRkq[G^\circ]$.
	The theorem then follows from \cref{lem:tw-ptw}.
\end{proof}

Summarising all results we get the following equivalences.

\begin{theorem}
	\label{thm:equivalence}
	Let $k,q\geq 1$ and $G$ be a graph. The following are equivalent:
	\begin{bracketenumerate}
		\item $G$ admits a tree decomposition of width at most $k-1$ and depth at most $q$.\label{ax:tdG}
		\item $G^\circ$ admits a tree decomposition of width at most $k-1$ and depth at most $q$.\label{ax:tdGc}
		\item $G^\circ$ admits an exact \preTreeDec\ of width at most $k-1$ and depth at most $q$.\label{ax:ptdGc}
		\item The cop player wins $\monCRkq[G^\circ]$.\label{ax:mcpGc}
		\item The cop player wins $\CRkq[G^\circ]$.\label{ax:cpGc}
		\item The cop player wins $\monCRkq$.\label{ax:mcpG}
		\item The cop player wins $\CRkq$.\label{ax:cpG}
	\end{bracketenumerate}
\end{theorem}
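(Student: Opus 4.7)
The plan is to weave together the results already proven in the paper; the only genuinely new content is the implication (\ref{ax:cpGc}) $\Rightarrow$ (\ref{ax:tdGc}), which is essentially \cref{thm:exactness}. Everything else is bookkeeping plus the observation that adding self-loops changes nothing relevant.

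First, I would establish (\ref{ax:tdG}) $\Leftrightarrow$ (\ref{ax:tdGc}). The graphs $G$ and $G^\circ$ differ only by self-loops, and a tree decomposition of $G$ is automatically one of $G^\circ$: every vertex lies in some bag by \ref{ax:TreeDec1}, so every self-loop $vv$ is covered by $G^\circ[\beta(t)]$ at any bag containing $v$. The converse is immediate, and the width and depth are determined purely by the bag sizes and tree shape, so these parameters are preserved in both directions. Combined with \cref{lem:tw-ptw}, which gives (\ref{ax:tdG}) $\Leftrightarrow$ (\ref{ax:ptdGc}), the three items (\ref{ax:tdG}), (\ref{ax:tdGc}), (\ref{ax:ptdGc}) are thus mutually equivalent.

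Next, \cref{lem:moncr-vs-tkq} applied to $G$ directly gives (\ref{ax:tdG}) $\Leftrightarrow$ (\ref{ax:mcpG}). To connect the games on $G$ and on $G^\circ$, I would invoke the observation made just after \cref{def:Ekq-cops}: self-loops do not change the edge components reachable by the robber, so the two games are equivalent from each player's perspective, both in the general and in the monotone variant. This yields (\ref{ax:mcpG}) $\Leftrightarrow$ (\ref{ax:mcpGc}) and (\ref{ax:cpG}) $\Leftrightarrow$ (\ref{ax:cpGc}). The implication (\ref{ax:mcpGc}) $\Rightarrow$ (\ref{ax:cpGc}) is immediate, since every monotone winning strategy is in particular a winning strategy.

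It remains to close the loop via (\ref{ax:cpGc}) $\Rightarrow$ (\ref{ax:tdGc}). Here I apply \cref{thm:exactness} to obtain a tree decomposition of $G$ of width at most $k-1$ and depth at most $q$, then pass to $G^\circ$ using the already established (\ref{ax:tdG}) $\Leftrightarrow$ (\ref{ax:tdGc}). The main obstacle has therefore been overcome by \cref{thm:exactness}; what remains is the routine verification that every transition preserves the width and depth bounds and that the self-loop trick is innocuous on both sides of the cop--robber/decomposition correspondence.
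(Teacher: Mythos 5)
Your proposal is correct and follows essentially the same route as the paper: it invokes the same three main ingredients (\cref{lem:tw-ptw}, \cref{lem:moncr-vs-tkq}, and \cref{thm:exactness}), uses the same self-loop observation to pass between $G$ and $G^\circ$ in both the decomposition and the game, and closes the cycle of implications with (\ref{ax:cpGc}) $\Rightarrow$ (\ref{ax:tdG})/(\ref{ax:tdGc}) via \cref{thm:exactness}. The only cosmetic difference is that you derive (\ref{ax:mcpGc}) from (\ref{ax:mcpG}) via the self-loop equivalence rather than also applying \cref{lem:moncr-vs-tkq} to $G^\circ$ directly, which is immaterial.
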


\begin{proof}
	Every tree decomposition of $G$ is also a tree decomposition of $G^\circ$ and vice versa, thus (\ref{ax:tdG}) and (\ref{ax:tdGc}) are equivalent.
	\Cref{lem:tw-ptw} shows the equivalence of (\ref{ax:tdG}) and (\ref{ax:ptdGc}).
	\Cref{thm:exactness} shows that (\ref{ax:cpGc}) implies (\ref{ax:tdG}).
	Let $G$ be a graph without edges.
	The cop player wins on $G$ before the first round and on $G^\circ$ after one placement with one cop.
	On a connected component with at least one edge the robber escape spaces are essentially the same.
	Thus for $q,k\geq 1$, by construction, (\ref{ax:mcpGc}) and (\ref{ax:mcpG}) as well as (\ref{ax:cpGc}) and (\ref{ax:cpG}) are equivalent.
	\Cref{lem:moncr-vs-tkq} shows that (\ref{ax:tdG}) is equivalent to (\ref{ax:mcpG}) and thus also (\ref{ax:tdGc}) to (\ref{ax:mcpGc}).
	Furthermore (\ref{ax:mcpG}) implies (\ref{ax:cpG}) and (\ref{ax:mcpGc}) implies (\ref{ax:cpGc}) by construction.
	Thus the theorem follows.
\end{proof}

\section{Excursion on counting homomorphisms}
\label{sec:hom}

In this section we give an overview over the field of counting homomorphisms and the equivalence relations on graphs, that can be derived from these counts.
We focus ourselves to the results and open questions regarding the homomorphism counts from graphs in the class $\mathcal{T}^k_q$, for fixed $k,q\geq 0$.

We start by recalling the important definitions.
Let $G,F$ be two graphs.
A \emph{homomorphism} from $F$ into $G$ is a function $\varphi \colon V(F) \rightarrow V(G)$, such that for every $uv \in E(F)$, it holds that $\varphi(u) \varphi(v) \in E(G)$.
By $\operatorname{hom}(F,G)$ we denote the number of homomorphisms from $F$ into $G$.
Let $\mathcal{F}$ be a graph class.
We say two graphs $G$ and $H$ are \emph{homomorphism indistinguishable over $\mathcal{F}$} if, for every $F \in \mathcal{F}$ it holds that $\operatorname{hom}(F,G) = \operatorname{hom}(F,H)$, we write $G \equiv_\mathcal{F} H$.
A graph class $\mathcal{F}$ is \emph{homomorphism distinguishing closed} if, for every $F\notin \mathcal{F}$, there exist two graphs $G,H$ such that $G \equiv_\mathcal{F} H$, but $\operatorname{hom}(F,G) \neq \operatorname{hom}(F,H)$.

In \cite{FluckSS24} the authors have reduced the question whether the class $\mathcal{T}^k_q$ is homomorphism distinguishing closed down to the question if monotonicity is a restriction for the cop player.
The definition of the cops-and-robber game the authors use is slightly different.
In their game the robber hides in vertices and is caught if a cops occupies the same vertex.
With the same arguments as in the proof of \cref{thm:equivalence} we observe that these games are equivalent.
The following lemma is thus implied in \cite{FluckSS24}.

\begin{lemma}[\cite{FluckSS24}]
	\label{lem:cop-hdc}
	Let $k,q\geq 1$.
	The graph class $\mathcal{C}\coloneqq \{G\mid \text{cop player wins }\CRkq\}$ is homomorphism distinguishing closed.
\end{lemma}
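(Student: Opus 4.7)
The plan is to reduce Lemma~\ref{lem:cop-hdc} to the corresponding result in \cite{FluckSS24}, which establishes homomorphism distinguishing closure for the class of graphs on which the cop player wins \emph{their} variant of the game, where the robber occupies a vertex (and is caught by having a cop placed on that vertex) rather than an edge. The bulk of the work sits in their paper; the only gap is a translation between the two game variants.

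The first step is to verify that for every graph $G$, the cop player wins $\CRkq$ if and only if she wins the vertex-version of the game on $G$, using the same bounds $k$ on simultaneous cops and $q$ on total placements. This is essentially the observation already used in this paper when passing between games on $G$ and on $G^\circ$ (see the remarks preceding \cref{lem:moncr-vs-tkq} and the corresponding implications in the proof of \cref{thm:equivalence}): after each move the robber's reachable region is a connected component of $G\setminus(X_{i-1}\cap X_i)$, which can be described equivalently by any vertex in that component or by any edge in it. A winning cop strategy in either formulation translates step-by-step into a winning strategy in the other by picking, for each reachable component, an arbitrary witness of the required type; this translation preserves both cop bounds because each placement on one side corresponds to exactly one placement on the other.

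Once the equivalence is established, $\mathcal{C}$ coincides with the corresponding class for the vertex-version, and the lemma follows directly from the result in \cite{FluckSS24}. Their proof takes an arbitrary $F\notin\mathcal{C}$, fixes a winning strategy of the robber on $F$, and uses it to construct a pair of graphs $G,H$ with $G\equiv_{\mathcal{C}} H$ but $\operatorname{hom}(F,G)\neq \operatorname{hom}(F,H)$. The construction follows the Cai--Fürer--Immerman template: the robber's escape strategy on $F$ identifies where to apply a local twist, while the bound of at most $k$ simultaneous cops and at most $q$ placements ensures that every $F'\in\mathcal{C}$ has a cop strategy whose trace can be used to build a bijection between homomorphisms $F'\to G$ and $F'\to H$ that respects the twist.

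The main obstacle, were one forced to reprove Lemma~\ref{lem:cop-hdc} from scratch rather than cite \cite{FluckSS24}, would be the design of the distinguishing pair $(G,H)$ and, in particular, the verification that the placement budget $q$ really does suppress the CFI-style twist from the homomorphism counts of every $F'\in\mathcal{C}$ — a careful counting/bijection argument that is precisely the technical content of their result. Reducing to \cite{FluckSS24} via the game equivalence above sidesteps this work entirely, which is the natural route in the context of the present paper since the novel contribution here is \cref{thm:exactness} (monotonicity), not the CFI construction.
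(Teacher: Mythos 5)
Your proposal takes essentially the same route as the paper: Lemma~\ref{lem:cop-hdc} is obtained by citing the corresponding result of \cite{FluckSS24} for their vertex-occupying robber variant and observing, as the paper does via the remarks before \cref{lem:moncr-vs-tkq} and the argument in \cref{thm:equivalence}, that the two game formulations are equivalent. The extra paragraph you add about the CFI-style construction inside \cite{FluckSS24} is a reasonable gloss on their proof but is not needed for the reduction and is not part of the present paper's argument.
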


In this paper we show that the cop player wins \CRkq if and only if $G\in \mathcal{T}^k_q$, thus the we get the following.

\begin{theorem}
	Let $k,q\geq 0$ be integers.
	The class $\mathcal{T}^k_q$ is homomorphism distinguishing closed.
\end{theorem}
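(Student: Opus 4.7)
The plan is to combine the equivalences established in Theorem~\ref{thm:equivalence} with the already-known Lemma~\ref{lem:cop-hdc}. The whole point of proving monotonicity of the game is precisely to identify the class $\mathcal{T}^k_q$ with a class defined by a (possibly) non-monotone search game, because non-monotone characterisations are the ones that interact well with homomorphism indistinguishability.

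For the main case $k,q \geq 1$, I would simply invoke the equivalence (\ref{ax:tdG}) $\Leftrightarrow$ (\ref{ax:cpG}) from Theorem~\ref{thm:equivalence}, which gives the set-theoretic identity
\[
  \mathcal{T}^k_q \;=\; \{G \mid \text{cop player wins } \CRkq\} \;=\; \mathcal{C}.
\]
Then Lemma~\ref{lem:cop-hdc} applied to $\mathcal{C}$ is literally the statement to be proven. So the entire substance of the theorem is hidden in the fact that the two sides of this identity coincide, which is exactly what Theorem~\ref{thm:exactness} (together with Lemma~\ref{lem:moncr-vs-tkq} for the other direction) delivers.

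For the boundary cases $k = 0$ or $q = 0$, the definition of $\mathcal{T}^k_q$ in the preliminaries only covers $k,q \geq 1$, so one has to interpret these values. In both cases (width $\le -1$, or depth $\le 0$) every bag along every root-to-leaf path must be empty, so by \ref{ax:TreeDec1} the class $\mathcal{T}^k_q$ contains exactly the edgeless graphs. I would handle this directly: for any graph $F$ with at least one edge, the graphs $G = K_n$ and $H = \overline{K_n}$ (on the same vertex count $n \geq |V(F)|$) are homomorphism indistinguishable over the class of edgeless graphs, because $\hom(F',G) = n^{|V(F')|} = \hom(F',H)$ for every edgeless $F'$, but $\hom(F, \overline{K_n}) = 0 < \hom(F,K_n)$ because $F$ has an edge. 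This witnesses homomorphism distinguishing closure for the edgeless case.

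There is no real obstacle at this stage — all the technical difficulty was already absorbed by the construction in Section~\ref{sec:make-exact} and by the work of~\cite{FluckSS24} behind Lemma~\ref{lem:cop-hdc}. The only subtlety worth being careful about is to make sure that Lemma~\ref{lem:cop-hdc} is stated for the same game variant $\CRkq$ (robber on edges, as in Definition~\ref{def:Ekq-cops}) as the one characterised by Theorem~\ref{thm:equivalence}; the preceding discussion in Section~\ref{sec:hom} indicates that the vertex-based variant of~\cite{FluckSS24} is equivalent, so this causes no issue, and the proof reduces to a single line invoking the two results.
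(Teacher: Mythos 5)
Your main case ($k,q\geq 1$) is exactly the paper's argument: identify $\mathcal{T}^k_q$ with $\{G\mid\text{cop player wins }\CRkq\}$ via \cref{thm:equivalence} and invoke \cref{lem:cop-hdc}; the remark about the vertex-based versus edge-based game variant is also how the paper deals with that point. No issue there.

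The boundary case, however, contains a genuine (if small) error: you claim that for $k=0$ or $q=0$ the class $\mathcal{T}^k_q$ consists of exactly the edgeless graphs. By \ref{ax:TreeDec1} a tree decomposition must satisfy $\bigcup_{t\in V(T)} G[\beta(t)] = G$, so every \emph{vertex} of $G$ must occur in some bag; since width $\leq -1$ or depth $\leq 0$ forces all bags to be empty, the only graph in the class is the graph with no vertices at all (this is also what the paper's proof states). Consequently your argument, which only treats $F$ with at least one edge, misses all nonempty edgeless graphs $F$ (e.g.\ $F=K_1$): these do lie outside $\mathcal{T}^k_q$ and must also be distinguished. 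The repair is immediate and in the spirit of the paper's own (rather terse) argument: the only homomorphism from the empty graph is the empty function, so \emph{every} pair of graphs is homomorphism indistinguishable over the class, and it suffices to exhibit, for each nonempty $F$, two graphs with different homomorphism counts from $F$; for edgeless $F$ on $n_F\geq 1$ vertices take hosts with different numbers of vertices, since $\operatorname{hom}(F,H)=|V(H)|^{n_F}$, while your $K_n$ versus $\overline{K_n}$ pair covers $F$ with an edge. With that correction the proposal is complete and coincides with the paper's route.
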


\begin{proof}
	Assume $k,q\geq 1$.
	Then theorem follows directly from \cref{lem:cop-hdc,thm:equivalence}.
	Thus assume $k=0$ or $q=0$.
	The only graph that has a tree decomposition of depth $0$ or width $-1$ is the empty graph.
	The only homomorphism from the empty graph into any graph is the empty function, which is a homomorphism independent of the right-hand-side graph.
	Thus there are no graphs that can be distinguished by the number of homomorphisms from the empty graph.
	Thus the theorem holds.
\end{proof}

\section{Conclusion}
We gave a new characterisation of bounded depth treewidth by the cops and robber game with both a bound on the number of cops and on the number of placements,where the cops are allowed to make non-monotone moves.
As a corollary we gave a positive answer to an open question on homomorphism counts.
The core of our contribution is a proof of monotonicity of this game.
For this proof we substantially reorganise a winning strategy.
First we transform it into a pre-decomposition.
Then we apply a breadth-first `cleaning up' procedure along the pre-decomposition (which may temporarily lose the property of representing a strategy), in order to achieve monotonicity while controlling the number of cop placements simultaneously across all branches of the decomposition via a vertex exchange argument(cf.~the proof of~\cref{lem:depth}).
As an interesting observation we obtain that cop moves into the back country, i.\,e.~to positions that are not part of the boundary, can be ignored and the depth of the exact \preTreeDec\ is the number of cops placed into the robber escape space:
We observe that in the proof of \cref{cl:depth-leafs-new} where we compute how much larger the depth at some node $t^i_j$ at step $i$ is than at the \considered\ node $s_i$ the depth increases only if the node $t^i_j$ is branching by \cref{obs:self-loops} as $t^i_j$ has a child where the cone contains only a self-loop and hence this is a move into the robber space.
\begin{corollary}
	$\dep(T,r,\beta_{n_T},\gamma_{n_T})\leq \max_{\ell\in L(T)} |\{t\in P_\ell\mid t \text{ is branching} \}|$.
\end{corollary}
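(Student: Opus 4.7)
My plan is to prove by induction on $i$ the strengthening
\[
\left|\bigcup_{s\in P_t} \beta_i(s)\right| \leq B(t), \quad \text{where } B(t) := \left|\{s\in P_t\setminus\{r\} : s \text{ is branching}\}\right|,
\]
for every $i\in[n_T]$ and every $t\in V(T_i)$. Setting $i=n_T$ and $t=\ell\in L(T)$, and combining \cref{obs:exact-subtree-depth} with \cref{lem:exact}, the left-hand side equals $\sum_{s\in P_\ell\setminus\{r\}} |\beta_{n_T}(s)\setminus\beta_{n_T}(p_s)|$, the depth contribution along $P_\ell$. Taking the maximum over leaves then yields the corollary.

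The base $i=0$ is vacuous since $T_0=\emptyset$. For the inductive step, I would first treat $t\in V(T_{i-1})$ using the vertex-exchange argument already developed in the proof of \cref{lem:depth} (via \cref{cl:new-vertex-new}, \cref{cl:remove-vertex-new}, and \cref{lem:width}), which gives $|\bigcup_{s\in P_t}\beta_i(s)| \leq |\bigcup_{s\in P_t}\beta_{i-1}(s)|$; since $B(t)$ does not depend on $i$, the induction hypothesis closes this case.

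For the remaining case $t=t^i_j$, I would write
\[
\bigcup_{s\in P_{t^i_j}} \beta_i(s) = \left(\bigcup_{s\in P_{s_i}} \beta_i(s)\right) \cup \bigl(\beta_i(t^i_j)\setminus\beta_i(s_i)\bigr),
\]
bound the first term by $B(s_i)$ via the previous case applied to $s_i\in V(T_{i-1})$, and refine \cref{cl:depth-leafs-new} to $|\beta_i(t^i_j)\setminus\beta_i(s_i)|\leq 1$, with equality forcing $t^i_j$ to be branching. The refinement is essentially already in the existing proof of \cref{cl:depth-leafs-new}: any $v$ in the difference comes with a child $c$ of $t^i_j$ satisfying $\gamma(t^i_j,c)=\{vv\}$, so by \cref{obs:self-loops} the node $t^i_j$ is branching and $v$ is the newly placed cop; uniqueness of $v$ comes from the game rule (\cref{def:Ekq-cops}) that a single round adds at most one cop, so $|\beta(t^i_j)\setminus\beta(s_i)|\leq 1$. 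Summing gives $B(s_i)+1=B(t^i_j)$ if $t^i_j$ is branching and $B(s_i)=B(t^i_j)$ otherwise.

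The main obstacle is recognising that no substantial new machinery is required: the vertex-exchange engine of \cref{lem:depth} together with the branching identification implicit in the proof of \cref{cl:depth-leafs-new} suffice, once they are explicitly coupled with the game's rule of at most one new cop per round.
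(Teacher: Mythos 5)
Your proposal is correct and takes essentially the same route as the paper: the paper's justification of this corollary is a one-sentence remark in the conclusion pointing to \cref{cl:depth-leafs-new} and \cref{obs:self-loops} (the depth increment from $s_i$ to $t^i_j$ is nonzero only when $t^i_j$ is branching) combined with the induction and vertex-exchange machinery already set up in the proof of \cref{lem:depth}, which are exactly the ingredients you assemble into an explicit induction with the strengthened invariant $\bigl|\bigcup_{s\in P_t}\beta_i(s)\bigr|\leq B(t)$.
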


In the future, it would be interesting to know if it is possible to give a proof that entirely argues with game strategies (not requiring pre-decopmositions), and we leave this open.
We also leave open whether a dual object similar to brambles can be defined for bounded depth treewidth. 
Finally, given a winning strategy for $k$ cops with $q$ placements,  
it would be interesting to know if it is possible to bound the number of cops necessary for winning with only $q-1$ placements in terms of $k$ and $q$, given that the cop player still can win.

\bibliography{mon}
\bibliographystyle{plainurl}

\appendix

\end{document}